\newtheorem{theorem}{Theorem}
\newtheorem{proposition}{Proposition}
\newtheorem{lemma}{Lemma}
\newtheorem{corollary}{Corollary}
\newcommand{\N}{\mathcal{N}}
\newcommand{\K}{\mathcal{K}}
\newcommand{\E}{\mathcal{E}}
\newcommand{\bv}{\mathbf{v}}
\newcommand{\hv}{\hat{v}}
\newcommand{\csa}{FNE$_{sa}$}
\newcommand{\caa}{FNE$_{aa}$}
\newcommand{\B}{$\mathcal{B}$}
\begin{document}

\title{\bf Towards Better Models of Externalities in Sponsored Search Auctions
}
\date{}
%
%
%
%
%

%
\author{
%
%
Nicola Gatti\\
       Politecnico di Milano\\
       Milano, Italy\\
       nicola.gatti@polimi.it
\and
Marco Rocco\\
       Politecnico di Milano\\
       Milano, Italy\\
       marco.rocco@polimi.it
\and 
Paolo Serafino\\
       Teesside University\\
       Middlesbrough, UK\\
       p.serafino@tees.ac.uk
\and  
Carmine Ventre\\
       Teesside University\\
       Middlesbrough, UK\\
       c.ventre@tees.ac.uk
}

\maketitle

\begin{abstract}
Sponsored Search Auctions (SSAs) arguably represent the problem at the intersection of computer science and economics with the deepest applications in real life. Within the realm of SSAs, the study of the effects that showing one ad has on the other ads, a.k.a. externalities in economics, is of utmost importance and has so far attracted the attention of much research. However, even the basic question of modeling the problem has so far escaped a definitive answer. The popular cascade model is arguably too idealized to really describe the phenomenon yet it allows a good comprehension of the problem. Other models, instead, describe the setting more adequately but are too complex to permit a satisfactory theoretical analysis. In this work, we attempt to get the best of both approaches: firstly, we define a number of general mathematical formulations for the problem in the attempt to have a rich description of externalities in SSAs and, secondly, prove a host of results drawing a nearly complete picture about the computational complexity of the problem. We complement these approximability results with some considerations about mechanism design in our context.
\end{abstract}
\section{Introduction}
The computation of solutions maximizing the \emph{social welfare}, i.e., maximizing the total ``happiness'' of the advertisers, in sponsored search auctions (SSAs) strongly depends on how such happiness is defined. Clearly, the more clicks their ads receive, the more content advertisers are. A naive measure to forecast clicks, named click through rate (CTR), would only consider the \emph{quality} of the ad itself (``better'' ads receive more clicks). However, one cannot overlook the importance of \emph{externalities} in this context: specifically, \emph{slot-dependent externalities} (i.e., ads positioned higher in the list have a higher chance to get a click) and \emph{ad-dependent externalities} (e.g., the ad of a strong competitor -- e.g., BMW -- shown in the first slot can only decrease the number of  clicks that the ad -- e.g., of Mercedes -- in the second slot gets).  Much research focused on modeling externalities in SSAs and providing algorithms for the resulting optimization problem. 

On one hand of the scale, there is the simple, yet neat, \emph{cascade} model \cite{cascade,DBLP:conf/wine/AggarwalFMP08}. In the cascade model, users are assumed to scan the ads \emph{sequentially} from top to bottom and the probability with which a user clicks on the ad $a_i$ shown in slot $s_m$ is the product of the intrinsic quality $q_i$ of the ad, the relevance $\lambda_m$ of slot $s_m$ ({slot-dependant externality}) and of \emph{all} the ads allocated to slots $s_1$ through $s_{m-1}$. A host of results is proved in this model as the input parameters vary (e.g., $\lambda_m \in \{0,1\}$ rather than $\lambda_m \in [0,1]$). In its more general version, the optimization problem of social welfare maximization is conjectured to be NP-hard, shown to be in APX (i.e., a $1/4$-approximation algorithm is given) and shown to admit a QPTAS (a quasi-polynomial time approximation scheme) \cite{cascade}. In addition to its unknown computational complexity, the cascade model has two main limitations to be considered a satisfactory model of externalities in SSAs. First, it assumes that users have unlimited ``memory'' and that, consequently, an ad in slot $s_1$ exerts externalities to an ad many slots below. This is experimentally disproved in \cite{Segal} wherein it is observed how the \emph{distance} between ads is important. Second, it assumes that the externality of an ad is the same no matter what ad is exerted on. Nevertheless, while BMW can have a strong externality on Mercedes since both makers attract the high end of market, the externality on makers in a different price bracket, e.g., KIA, is arguably much less strong.

On the other hand of the scale, we can find models 
that try to address these limitations. 
In \cite{Fotakis} Fotakis \emph{et al.} propose a model whereby users have limited memory, i.e., externalities occur only within a \emph{window} of $c$ consecutive slots, and consider the possibility that externalities boost CTRs (\emph{positive} externalities) as well as reduce CTRs (\emph{negative} externalities).
In particular, the externalities of an ad apply to ads displayed $c$ slots below (\emph{forward} externalities) and ads displayed $c$ slots above (\emph{backward} externalities).
Moreover, in order to model the fact that externalities might have ad-dependent effect, they introduce the concept of \emph{contextual graph}, whereby vertices represent ads and edge weights represent the externality between the endpoints.
Their model turned out to be too rich to allow tight and significant algorithmic results 
(their main complexity results apply to the arguably less interesting case of forward positive externalities).

\subsection{Our contribution} 
The present work can be placed in the middle of this imaginary spectrum of models for externalities in SSAs. Our main aim is to enrich the literature by means of more general ways to model slot- and ad-dependent externalities, while giving a (nearly) complete picture of the computational complexity of the problem. We do not attempt to explicitly model the user's behavior but 
bridge the aforementioned models in order to overcome the respective weaknesses.
In detail, we enrich the naive model of SSAs by adding the concepts of window 
and contextual externalities, while keeping ad- and slot-dependent externalities factorized as in the cascade model.
We also complement much of the known literature by studying a model wherein the externalities coming from ads and slots cannot be expressed as a product.
Our study gives rise to a number of novel and rich models for which we can provide (often tight) approximability results (see Table \ref{tab:results} for an overview).\footnote{It is important to notice that, as common in the literature on SSAs, the number of slots is a parameter of the problem (rather than fixed) for otherwise the computational problem becomes easy (by, e.g., running the color coding algorithm).}
Since the case of  \emph{selfish} advertisers is of particular relevance in this context, we also initiate the study of mechanism design for the optimization problems introduced and consider the \emph{incentive-compatibility} of our algorithms, i.e., whether they can be augmented with payment functions so to work also with selfish advertisers.

For the version in which slot- and ad-dependant externalities cannot be factorized  and externalities occur in a window of size $c$, we prove that the optimization problem is in $P$, if $c$ is a constant. We consider the LP relaxation of the ILP describing the problem and prove that the integrality gap is 1. 
\begin{table*}[t]
\centering
\begin{tabular}{l|c|c|c|c|c|}
\cline{2-6}
                         & \multicolumn{2}{c|}{\textbf{\caa$(c)$}}                              & \multicolumn{2}{c|}{\textbf{\caa$(K)$}} &                               \multirow{2}{*}{\textbf{\csa$(c)$}} \\ \cline{2-5} 
                         & \textbf{nr}                          & \textbf{r}                         & \textbf{nr}                        & \textbf{r}                             &  \\ \hline
\multicolumn{1}{|l|}{\textbf{LB}} & APX-hard                    & \multirow{2}{*}{\rule{0ex}{13pt} APX-complete} & \multirow{2}{*}{\rule{0ex}{13pt} poly-APX-complete} & \multirow{2}{*}{\rule{0ex}{13pt} APX-complete} & \multirow{2}{*}{\rule{0ex}{13pt} P$\;^\star$} \\ \cline{1-2}
\multicolumn{1}{|l|}{\textbf{UB}} & \rule{0ex}{13pt}$\frac{\log(N)}{2\min\{N,K\}}^\star$ &                           &                           &        &                       \\ \hline
\multicolumn{1}{|l|}{\textbf{SP}} & \rule{0ex}{13pt} $\frac{\log(N)}{2\min\{N,K\}}^\star$     & $1/2$                       & $1/K$                     & $1/2$  & $1\;^\star$                         \\ \hline
\end{tabular}
\caption{Summary of our results: LB (UB, resp.) stands for lower (upper, resp.) bound on the approximation of the problem; the row SP, instead, contains the approximation guarantees we obtain with truthful mechanisms. Results marked by `$\star$' require $c=O(1)$. APX-completeness of a subclass of \caa$(c)$-nr is also given. (See the model for details on the notation.)}\label{tab:results}
\end{table*}

For the variant of the problem with factorized 
externalities, contextual ad-dependent externalities and window of $c$ slots, a distinction on the effects that empty slots have on users' behavior is useful. In a sort of whole page optimization fashion~\cite{wholepage}, we think of those slots as occupied by a \emph{special} (fictitious) ad used to refresh (e.g., by means of pictures) the user's attention. 

If the special ad cannot be used (or, equivalently, the user's attention cannot be reset) we prove that the allocation problem is poly-APX-complete whenever users have a ``large'' memory (i.e., the window equals the number of slots $K$). Specifically, we give an approximation preserving reduction from the Longest Path problem and design an 
approximation algorithm using several different ideas and sources of approximation; interestingly, its approximation guarantee matches the best known approximation guarantee for Longest Path. However, we prove that this algorithm cannot be used in any truthful mechanism and note that a simple single-item second price auction gives a weaker, yet close, truthful approximation. We complement the results for this model with the identification of tractable instances for which we provide an exact polynomial-time algorithm. For $c<K$ instead, we are unable to determine the exact hardness of approximating the problem in general. To the APX-hardness proof, we pair a number of approximation algorithms that assume constant $c$. The first, based on color coding \cite{colorcoding}, returns a non-constant approximation on any instance of SSA. The second assumes that the contextual graph is complete and returns a solution which (roughly) guarantees a $\gamma_{\min}^c$ fraction of the optimum social welfare, $\gamma_{\min}$ being the minimum edge weight in the graph. Interestingly, this algorithm shows the APX-completeness of the subclass of instances having   constant $\gamma_{\min}$ (we indeed further provide a hardness result for instances with complete contextual graphs). We believe the tight result for this subclass of instances to be quite relevant. In fact, complete contextual graphs are quite likely to happen in real-life: the results returned by a keyword search are highly related to one another, and, as such, each pair of ads has a non-null externality, however small.

If the special ad can be used, the problem becomes easier and turns out to be APX-complete, for any $c$. We first prove the problem with $c=K$ to be APX-hard, via a reduction from (a subclass of) ATSP (i.e., asymmetric version of TSP) and then surprisingly connect instances with $c<K$ to instances with $c=K$ by reducing the case with $c=1$ to the case with $c=K$ and \emph{binary} externalities (intuitively, the weights of the edges of the contextual graph can be either 0 or 1). We finally observe how a simple greedy algorithm cleverly uses the special ad to return $1/2$-approximate solutions and leads to a truthful mechanism. 

\section{Model} 
In a SSA we have $N$ ads and $K$ slots. We assume that each ad corresponds to an advertiser; this is w.l.o.g. from the optimization point of view. 
We denote each ad by $a_i$ with $i \in \N$, where $\N=\{1,\dots, N\}$ is the set of indices of the ads.
We introduce a fictitious ad, denoted by $a_\bot$, s.t., when allocated, the slot is left empty.
The $K$ slots are denoted by $s_m$ with $m \in \K$, $\K=\{1,\dots, K\}$ being the set of slot indices s.t. $s_1$ is the slot at the top of the page and $s_K$ is at the bottom.
We also have a fictitious slot, denoted by $s_\bot$ s.t. an ad allocated to $s_\bot$ is not displayed in the webpage. Each ad $a_i$ is characterized by: (\emph{i}) the \emph{quality} $q_i \in [0,1]$, i.e., the probability a user clicks on ad~$a_i$ when he observes it, irrespectively of other externalities; (\emph{ii}) the valuation $v_i \in \mathbb{R}^+$ advertiser~$i$ associates to his ad being clicked by a user.
The fictitious ad $a_\bot$ has $q_\bot = v_\bot = 0$.

A feasible allocation of ads to slots, denoted as $\theta$, consists of an ordered sequence of  ads $\theta=\langle a_1, \ldots, a_K\rangle$ s.t. the ads are ordered by increasing slot number, i.e., $a_1$ is allocated to the top slot, $a_K$ to the bottom one. Every ad $a_i$ can be allocated to at most one slot, whereas $a_\bot$ can be allocated to more than one slot. The set of all possible feasible allocations is denoted as $\Theta$. With a slight abuse of notation, we let (\emph{i}) $\theta(a_i)$ denote the index of the slot ad $a_i$ is allocated to, and (\emph{ii}) $\theta(s_m)$ denote the index of the ad allocated to $s_m$. 
Given $\theta \in \Theta$, the \emph{click through rate} of ad $a_i$, denoted as $CTR_i(\theta)$, is the probability ad $a_i$ is clicked by the user taking externalities into consideration. The optimal allocation $\theta^*$ is the one maximizing the \emph{social welfare}, namely: $\theta^* \in \arg \max_{\theta \in \Theta} SW(\theta)$, where $$SW(\theta)= \sum_{i \in \N} CTR_i(\theta) v_i.$$ A $1/\alpha$-approximate solution $\theta$ satisfies $SW(\theta) \geq SW(\theta^*)/\alpha$.

Typically, $CTR_i(\theta)$ defines how the quality $q_i$ of ad $a_i$ is ``perturbed'' by the externalities in terms of click probability. Accordingly, 
in general 
$CTR_i(\theta)=q_i \Gamma_i(\theta)$, $\Gamma_i(\theta)$ being a function encoding the effect of externalities. E.g., in the cascade model, $$\Gamma_i(\theta)=\Lambda_{\theta(a_i)} \prod_{l=1}^{\theta(a_i)-1}\gamma_{\theta(s_l)},$$ where $\Lambda_{\theta(a_i)}=\prod_{l=1}^{\theta(a_i)}\lambda_l$, $\lambda_m \in [0,1]$, called the \emph{factorized prominence} of~$s_m$, denotes the slot-dependant externality and $\gamma_i\ \forall i \in \N$, called \emph{continuation probability}, denotes the ad-dependent externality. (W.l.o.g., we assume $\Lambda_1=\lambda_1=1$.) 
Our conceptual contribution rests upon novel and richer ways to define $\Gamma_i(\theta)$, along three main dimensions.

The \emph{first dimension} concerns the \emph{user memory}, a.k.a. \emph{window}. We let $c$ be the number of ads displayed above $a_i$ in $\theta$, from $s_{\theta(a_i)-1}$ to $s_{\theta(a_i)-c}$, that affect $\Gamma_i(\theta)$. 
The \emph{second dimension} concerns a generalization of the externalities.
Here we propose two alternative families of externalities, called \emph{sa} (for slot-ad) and \emph{aa} (for ad-ad). The sa-externalities remove the factorization in slot- and ad--dependent externalities: i.e., 
$\lambda_m$ and $\gamma_i$ are substituted by parameters $\gamma_{m,j}\in [0,1]$, $m \in \K$ and $j \in \N$.
When the window is $c$, the CTR is defined as $CTR_i(\theta) = q_i \Gamma_i(\theta)$, where $$\Gamma_i(\theta)= \prod_{m = \max\{1,\theta(a_i)-c\}}^{\theta(a_i)-1} \gamma_{m,\theta(s_m)}.$$ 
This definition captures the situation in which an ad can affect the ads displayed below it in a different way according to the position in which it is displayed. %
For the aa-externalities, on the other hand, 
we preserve the factorization in $\lambda_m$ and $\gamma_i$, but redefine these latter parameters as $\gamma_{i,j}\in[0,1]$ where $a_j$ is the ad that is displayed in the slot just below $\theta(a_i)$.
It is convenient to see the $\gamma_{i,j}$'s as the weights of the \emph{contextual graph} $G=(\N,\E)$ where the direct edges $(i,j)$ weigh $\gamma_{i,j}>0$ and represent the way ad $a_i$ influences $a_j$. Note that non-edges of $G$ correspond to the pairs of ads $a_i$, $a_j$ s.t. $\gamma_{i,j}=0$. Here, with window $c$, $$\Gamma_i(\theta) = \Lambda_{\theta(a_i)} \prod_{l=\max\{1,\theta(a_i)-c\}}^{\theta(a_i)-1} \gamma_{\theta(s_l),\theta(s_{l+1})}$$ where $\Lambda_m$ is defined as above.
This definition captures the situation in which each ad can affect each other ad in a different way.

The \emph{third dimension} concerns the definition of $\gamma_{m,\bot}$ for the sa-externalities and $\gamma_{i,\bot}$ and $\gamma_{\bot, i}$ for the aa-externalities.
In the model \emph{with reset} we have $\gamma_{m,\bot} = 1$ for sa and $\gamma_{i,\bot} = \gamma_{\bot, i} = 1$ $\forall i \in \N \cup \{\bot\}$ for aa. This variant captures the situation in which slots can be distributed in the page in different positions (a.k.a., slates) and, in order to raise the user's attention, we can allocate a content, e.g. pictures, 
that nullifies the externality between the ad allocated before and after the content. In the model \emph{without reset}, $\gamma_{m,\bot} = 0$ for sa and $\gamma_{i,\bot}=\gamma_{\bot, i}=0$ $\forall i \in \N \cup \{\bot\}$ for aa, thus capturing the situation in which leaving a slot empty between two allocated slots does not provide any advantage. 

We let FNE$_x(c)$-y be the problem of optimizing the social welfare in our model with \emph{F}orward \emph{N}egative \emph{E}xternalities with window $c$, $x \in \{sa, aa\}$-ex\-ter\-nal\-i\-ties and y $ \in \{$r, nr$\}$ reset (r stands for reset; nr for no reset). When the value of y is not relevant for our results, we talk about FNE$_x(c)$.
We are interested in two particular subclasses of FNE$_{aa}(c)$, namely: (\emph{i}) subclass FNE$_{aa}^+(c)$-y, defined upon a complete contextual graph and such that $0 < \gamma_{\min} = \min_{i,j \in \N, i \neq j} \gamma_{i,j}$ and (\emph{ii}) subclass \B--FNE$_{aa}(c)$-y, where $\gamma_{i,j}$ can take values in $\{0,1\}$.

\subsection{Mechanism design} 
We use the theory of mechanism design to study the incentive-compatibility of our algorithms \cite{book}. 
A \emph{mechanism} ${M}$ is a pair $(A,P)$, where $A: (\mathbb{R}^+)^N \rightarrow \Theta$ is an algorithm that associates to any vector $\bv=(v_1, \ldots, v_N)$ of valuations a feasible outcome in $\Theta$
 (only valuations are private knowledge). The payment function $P_i:(\mathbb{R}^+)^{N} \rightarrow \mathbb{R}^+$ maps valuation vectors to monetary charges for advertiser~$i$. The aim of each advertiser is to maximize his own utility  $u_i(\bv, v_i) = CTR_i(A(\bv)) v_i - P_i(\bv)$. An advertiser could misreport his true valuation and declare $\hv_i \not = v_i$ when $u_i((\hv_i,\bv_{-i}), v_i) > u_i(\bv, v_i)$, $\bv_{-i}$ being the vector of the valuations of all the agents but $i$. We are then interested in truthful mechanisms. 
A mechanism is \emph{truthful} if for any $i \in \N$, $\bv_{-i} \in (\mathbb{R}^+)^{N-1}$, $v_i, \hv_i \in \mathbb{R}^+$, $u_i((\hv_i,\bv_{-i}), v_i) \leq u_i(\bv, v_i)$.

In this setting, a monotone algorithm \emph{must} be used in truthful mechanisms \cite{tardos}. 
Algorithm $A$ is monotone if for any $i \in \N$, $\bv_{-i} \in (\mathbb{R}^+)^{N-1}$, $CTR_i(A(\hv_i,\bv_{-i}))$ is non-decreasing in $\hv_i$. Important for our work is also the family of VCG-like mechanisms, a.k.a., \emph{Maximal In Range (MIR)} mechanisms. An algorithm $A$ is MIR if there exists $\Theta' \subseteq \Theta$ s.t. $A(\bv) \in \arg$ $\max_{\theta\in \Theta'} SW(\theta)$ $\forall \bv \in \mathbb{R}^{N}$~\cite{NisamRonen}. These algorithms can be augmented with a VCG-like payment so to obtain truthful mechanisms. (VCGs are MIR mechanisms wherein $\Theta' = \Theta$.) We are interested in mechanisms for which both $A$ and $P$ are computable in polynomial time. MIR mechanisms run in polynomial-time if the MIR algorithm does. 
As usual in the context of SSA, we adopt a pay-per-click payment scheme, i.e., we charge ${P_i(\bv)}/{CTR_i(A(\bv))}$ when a user clicks on~$a_i$. 

\section{FNE$_{sa}(c)$ is in $P$ for constant $c$}
Our presentation focuses on \csa$(1)$-nr to simplify the notation.
The more general cases when $c>1$ and the reset model is considered are easily obtainable by generalization from \csa$(1)$, but require a more cumbersome notation without significant new ideas (see discussion at the end of this section).
We first give the ILP formulation of \csa$(1)$-nr and prove that if there is an optimal fractional solution, then there are at least two feasible integral solutions with the same value of social welfare. Since it is well known, by LP theory, that the ellipsoid algorithm can be forced (in polynomial-time) to output an integral optimal solution, we are able to prove the following:
\begin{theorem}\label{thm:sainP}
For $c=O(1)$, there is a polynomial-time optimal algorithm for \csa$(c)$.
\end{theorem}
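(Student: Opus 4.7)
The plan is to follow the LP-relaxation strategy sketched immediately before the statement. I would first formulate \csa$(1)$-nr as an ILP: for every slot index $m \in \K$ with $m \geq 2$ and every pair $(j,i) \in (\N \cup \{\bot\})^2$, introduce a binary variable $y_{j,i,m}$ equal to $1$ iff $a_j$ is placed in $s_{m-1}$ and $a_i$ in $s_m$, together with auxiliary variables $x_{i,m}$ recording which ad is placed in each slot. The objective is
\[
\sum_{i \in \N} q_i v_i\, x_{i,1} \;+\; \sum_{m=2}^{K} \sum_{j,i} q_i v_i \, \gamma_{m-1, j} \, y_{j,i,m},
\]
subject to exactly one ad per slot ($\sum_i x_{i,m} = 1$), at-most-once use of real ads ($\sum_m x_{i,m} \leq 1$ for $i \in \N$, with $a_\bot$ unrestricted), and the linking constraints $\sum_i y_{j,i,m} = x_{j,m-1}$ and $\sum_j y_{j,i,m} = x_{i,m}$. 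The LP is obtained by relaxing the integrality of $x$ and $y$.

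The core step is to show that the LP has integrality gap $1$ by proving that from any fractional optimum $(x^*, y^*)$ one can construct two distinct integral feasible solutions achieving the same social welfare. I would trace an alternating cycle in the support of the fractional variables: starting from any fractional $y^*_{j,i,m}$, the linking constraints force further fractional variables to appear among the neighbours at slots $m-1$ and $m+1$, and a standard counting of tight constraints at a non-integral feasible point forces the trace to close into a cycle. Pushing $+\epsilon$ and $-\epsilon$ along this cycle until some variable hits $\{0,1\}$ yields two feasible solutions of strictly smaller total fractionality; by optimality of $(x^*, y^*)$ both must retain the same objective value. Iterating on each branch produces two integral optima that are distinct by construction, and hence certifies that the IP and LP values coincide.

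Given this lemma, an integral optimum can be recovered in polynomial time by running the ellipsoid method to compute the LP value and then iteratively fixing variables whose rounding preserves that value; the polynomial bound relies on the fact that for constant $c$ there are only $O(N^{c+1} K)$ variables and constraints. The generalization to $c > 1$ replaces each pair $(j,i)$ by an ordered $(c+1)$-tuple of ads occupying the window ending at $s_m$, keeping the LP polynomial in $N$ and $K$; the same cycle-cancellation argument carries over verbatim. The reset version is handled by setting $\gamma_{m,\bot}=1$ instead of $0$, which leaves the combinatorial structure of the ILP/LP unchanged.

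The main obstacle is the cycle-cancellation lemma itself: guaranteeing that the cycle always exists in the fractional support and that shifting along it preserves both slot coverage and the at-most-once constraint for real ads is the delicate part. This latter constraint is what distinguishes the problem from a classical DAG-longest-path LP (which would be integral by total unimodularity), and requires careful bookkeeping to verify that the $\pm\epsilon$ perturbation remains inside the feasible region on both sides; once this is established, optimality of $(x^*,y^*)$ forces the objective to stay constant along the cycle and the rest of the argument is routine.
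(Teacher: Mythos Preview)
Your overall scaffolding---writing an ILP on window-indexed variables, relaxing, arguing integrality gap $1$, and recovering an integral optimum via the ellipsoid method---is exactly what the paper does, and your remarks on extending to $c>1$ (replace pairs by $(c{+}1)$-tuples) and to the reset model match the paper's closing discussion almost verbatim.

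The divergence is in how integrality gap $1$ is established. The paper does \emph{not} use cycle cancellation. Instead it shows that any fractional feasible $x$ is a probability distribution over integral allocations: it defines $\mathbb{P}(\theta)$ as a product of conditional probabilities read off from the fractional entries, verifies $\sum_{\theta}\mathbb{P}(\theta)=1$ by a recursive (telescoping) computation peeling off the last slot, and concludes that $x$ is a convex combination of integral feasible points, so some integral point attains the LP value.

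Your cycle-cancellation route is not obviously wrong, but the step you label ``careful bookkeeping'' is the entire difficulty, and you have not supplied it. The at-most-once constraint $\sum_m x_{i,m}\le 1$ is a \emph{global} coupling across all slot layers; a local alternating cycle in the $y$-support supported on a few consecutive layers will, after the $\pm\epsilon$ shift, change several $x_{i,m}$'s, and there is no a priori reason the net change in $\sum_m x_{i,m}$ is zero for every real ad $i$. One of the two perturbed points may therefore leave the polytope. This is exactly why the constraint matrix is not a network/TU matrix---a point you yourself note---so invoking ``standard counting of tight constraints'' does not suffice. Without an explicit construction of a feasible improving/neutral direction that simultaneously respects all the at-most-once constraints (or a proof that such a direction always exists at a fractional vertex), the argument has a genuine gap. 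The paper's convex-combination argument sidesteps this obstacle by working globally rather than through local perturbations.
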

\noindent \csa$(1)$-nr can be formulated as following ILP: 
{\allowdisplaybreaks
\begin{align}
\max\sum_{m=2}^K  \sum_{i \in \N} \sum_{{j \in \N, j \not = i}}  \gamma_{m-1, j} q_i v_i x_{j,m,i} & + \sum_{i \in \N} x_{1,i} q_i v_i	 \nonumber \\
\textrm{subject to:} \hskip 11.4em & 	\nonumber \\
\sum_{m=2}^K \sum_{{j \in \N, j \not = i}} x_{j,m,i} + x_{1,i} \leq 1 			\quad & \quad \forall i \in \N 							\nonumber \\
x_{1,i} = \sum_{{j \in \N, j \not = i}} x_{i, 2, j} 						\quad & \quad \forall i \in \N				\nonumber 			\\
\sum_{j \in \N, j \not = i} x_{j,m,i} = \sum_{{j \in \N, j \not = i}} x_{i,m+1,j} 	\quad & \quad \forall i \in \N, 			\nonumber 			\\
														\quad & \quad 2 \leq m < K 			\nonumber 			\\
\sum_{i \in \N} x_{1, i} = 1 										\quad & \quad 						\label{eq:sumone}		\\
\sum_{j \in \N} \sum_{{i \in \N, i \not = j}} x_{j, m, i} = 1 				\quad & \quad \forall m \in \K\setminus\{1\}	\nonumber 			\\
x_{1,i} \in \{0,1\} 											\quad & \quad \forall i \in \N				\nonumber			\\
x_{j,m,i} \in  \{0,1\} 											\quad & \quad  \forall 2 \leq m \leq K, 		\nonumber			\\	
														\quad & \quad i, j \in \N, i \neq j 			\nonumber 
\end{align}
}
where $x_{j,m,i}=1$ iff $a_i$ is allocated to slot $s_m$, $m>1$, and $a_j$ is allocated to slot $s_{m-1}$; $x_{1,i}=1$ iff $a_i$ is allocated to $s_1$. The objective function and the constraints are rather straightforward and, hence, their description is omitted here. 

The next proposition proves Theorem \ref{thm:sainP} since it shows that we can solve the above ILP in polynomial-time, despite its similarities with the 3D-assignment, a well-known ${NP}$-hard problem.
\begin{proposition}\label{prop:csa1:poly}
The continuous relaxation of the above ILP always admits integral optimal solutions.
\end{proposition}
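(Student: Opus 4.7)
The plan is to show that the LP polytope has integer optimal vertices by arguing that any fractional optimum $x^*$ is a convex combination of integer feasible allocations of the same social welfare. I would begin by reinterpreting the LP as a network-flow problem on a layered DAG $H$ with source $s$, sink $t$, and one vertex $(i,m)$ per ad--slot pair: put $x_{1,i}$ on the arc $s\to(i,1)$, $x_{j,m,i}$ on the arc $(j,m-1)\to(i,m)$, and unit arcs $(i,K)\to t$ collecting flow from the last layer. Under this reading, constraints (2)--(5) encode exactly unit $s$-to-$t$ flow conservation, a totally unimodular system that on its own would guarantee integer LP vertices; integrality can therefore fail only through the per-ad cardinality constraint~(1), which caps the total flow through all layer-copies of any given ad at $1$.

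Given a fractional LP optimum $x^*$, apply a standard path-flow decomposition $x^*=\sum_k \alpha_k\,\mathbf{1}_{\pi_k}$, where each $\pi_k$ is an $s$-to-$t$ unit walk, equivalently an ad sequence $(i^k_1,\ldots,i^k_K)$, and the $\alpha_k>0$ sum to $1$. The crucial step is to exhibit a decomposition in which every $\pi_k$ is \emph{simple} (visits no ad twice) and therefore encodes a feasible integer allocation. The tightness of constraint~(1) is what powers this: whenever a walk $\pi_k$ revisits some ad $i$, the cap of $1$ on the total flow through $i$ forces the existence of a companion walk $\pi_\ell$ in the decomposition that also touches $(i,\cdot)$, and the two can be spliced at a shared ad to produce new walks with strictly fewer total repetitions while preserving the sum $x^*$. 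Iterating the uncrossing terminates in a decomposition into simple walks.

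Once every $\pi_k$ is simple, linearity of the objective gives $SW(x^*)=\sum_k \alpha_k\,SW(\pi_k)$; since each $\mathbf{1}_{\pi_k}$ is now a feasible integer LP point, $SW(\pi_k)\le SW(x^*)$ for all $k$, which averaged against the convex combination forces $SW(\pi_k)=SW(x^*)$ throughout the support. Because $x^*$ is fractional, the support contains at least two distinct walks, yielding the two integer feasible solutions of equal social welfare announced in the preamble to the proposition --- and in particular an integer LP optimum. The step I expect to be hardest is the uncrossing: choosing splice point and companion walk so that a quantitative progress measure (e.g.\ the total $\alpha$-weighted count of ad repetitions across the decomposition) strictly decreases with each swap without introducing fresh repeats elsewhere. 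I would rely on the layered structure of $H$ together with the tightness of constraint~(1) to control this, in the spirit of uncrossing arguments for assignment and matroid-intersection polytopes.
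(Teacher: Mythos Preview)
Your approach departs from the paper's, and the uncrossing step contains a real gap. The paper never decomposes $x^*$ into arbitrary $s$--$t$ walks and then repairs them; it directly exhibits the fractional optimum as a probability distribution over \emph{feasible} allocations. For an allocation $\theta$ (with ads renamed so that $a_l$ sits in slot $s_l$) the paper sets
\[
\mathbb{P}(\theta)\;=\;x_{1,1}\prod_{l=2}^{K}\frac{x_{l-1,l,l}}{\sum_{m\ge l}x_{l-1,l,m}},
\]
reads each factor as the conditional probability of the next ad given the prefix, and checks by a telescoping sum that $\sum_{\theta\in\Theta}\mathbb{P}(\theta)=1$. This sidesteps any uncrossing.

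Your uncrossing, as written, does not go through. First, the inference you draw from constraint~(1) is inverted: if some walk $\pi_k$ visits ad $i$ twice, then writing $n_\ell(i)$ for the number of visits of $\pi_\ell$ to ad $i$, the cap $\sum_\ell\alpha_\ell\,n_\ell(i)\le 1$ together with $\sum_\ell\alpha_\ell=1$ forces some $\pi_\ell$ with $n_\ell(i)=0$, i.e.\ a companion that \emph{avoids} $i$ --- not one that ``also touches $(i,\cdot)$''. Second, in a layered DAG two walks can be spliced only at a layer where they occupy the \emph{same} vertex, and nothing guarantees such a layer exists between the two $i$-visits of $\pi_k$. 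Worse, the target you set is unreachable in general: with $N=K=3$, put weight $\tfrac13$ on each of the non-simple walks $(1,2,1)$, $(2,3,2)$, $(3,1,3)$. Every LP constraint holds (constraint~(1) is tight for each ad), yet the only arcs carrying mass are exactly those on these three walks, so this feasible point admits \emph{no} convex combination of simple paths whatsoever. Since your decomposition step never invokes optimality of $x^*$ or the separable shape $\gamma_{m-1,j}\,q_iv_i$ of the objective coefficients, there is nothing in your argument to exclude such a point; any rescue of the flow-plus-uncrossing route would have to bring one of those two ingredients in explicitly.
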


\begin{proof}
We show that, if there is an optimal fractional solution $x$, then there are at least two feasible integral solutions with the same value of social welfare. 
Specifically, we prove that $x$ is equivalent to a probability distribution over integral allocations $\theta = \langle a_1, \ldots, a_K\rangle $. The probability $\mathbb{P}(\theta)$ given to $\theta$ is:

\begin{align*}
\mathbb{P}(\theta) 			&	= \prod_{i=1}^K \mathbb{P}\left(	\theta(a_i)=s_i	\Big| \bigwedge_{j<i}	\theta(a_j) = s_j\right) 							\\ & = x_{1,1}\prod_{l=2}^K \frac{x_{l-1,l,l}}{\sum\limits_{m \geq l} x_{l-1,l,m}}.
\end{align*}
In order to show that $\mathbb{P}(\theta)$ is actually a probability distribution over allocations, we show that $\sum_{\theta \in \Theta} \mathbb{P}(\theta) = 1$. 

The proof is recursive. Let $\Theta'$ be the set of allocations $\theta$ with the same first $K-1$ ads.  
The allocations in $\Theta'$ differ only for the ad allocated to $s_K$. To fix the notation, for $\theta \in \Theta'$ let $\theta(s_l)=a_l$, for $l<K$. We have:
\begin{align*}
\sum_{\theta\in\Theta'} \mathbb{P}(\theta) & = x_{1,1}\prod_{l = 2}^{K-1} \left(\frac{x_{l-1,l,l}}{\sum_{m \geq l} x_{l-1,l, m}}\right) \sum_{h \geq K}\frac{x_{K-1,K, h}}{\sum\limits_{m \geq K} x_{K-1,K,m}}				&			\\
&= x_{1,1}\prod_{l = 2}^{K-1} \left(\frac{x_{l-1,l,l}}{\sum_{m \geq l} x_{l-1,l,m}}\right) \frac{\sum_{h \geq K}x_{K-1,K,h}}{\sum_{m \geq K} x_{K-1,K,m}}					\\	&	= x_{1,1}\prod_{l = 2}^{K-1} \left(\frac{x_{l-1,l,l}}{\sum_{m \geq l} x_{l-1,l,m}}\right). 	
\end{align*}

\noindent By applying recursively the same argument above from $\Theta'' \supset \Theta'$, the set of all allocations $\theta$ satisfying $\theta(s_l)=a_l$, for $l \leq K-2$, down to the set of allocations having only the same first ad, we have $\sum_{\theta:\theta(s_1)=a_1} \mathbb{P}(\theta) = x_{1,1}$. Since (\ref{eq:sumone}) forces $\sum_{i\in \N}x_{1,i}=1$, we have $\sum_{\theta \in \Theta} \mathbb{P}(\theta) = \sum_{i \in \N} x_{1,i} = 1$. This shows that $\mathbb{P}(\theta)$ is a well defined probability distribution. The proof concludes by observing that all integral solutions are indeed feasible.
\end{proof}

\noindent To solve the problem when $c>1$, we just need to modify the ILP and allow each variable $x$ to depend on $c+2$ indices to take into account the (at most) $c$ indices of all the ads that precede the ad of interest. The reset model for $c=1$ instead requires the introduction of $K$ additional variables for $a_\bot$ to be visualized in each slot (together with some constraints to fix each variable for $a_\bot$ to a slot). 

Theorem \ref{thm:sainP} implies that mechanism design becomes an easy problem for \csa$(c)$ and $c=O(1)$, since the optimal algorithm can be used to obtain a truthful VCG mechanism. 

\section{\caa$(K)$-nr is Poly--APX--Complete}
%
\subsection{Easy Instances}
As a warm-up, we identify a significant class of instances of \caa$(K)$-nr for which we can design a polynomial-time optimal algorithm. These instances are characterized by the fact that the underlying contextual graph is a DAG, thus modeling nearly oligopolistic markets in which the ads can be organized hierarchically. The idea of Algorithm \ref{alg:polytime_on_DAG} is that since DAGs can be sorted topologically in polynomial time then we can \emph{rename} the ads as $a_1,\ldots,a_N$ so to guarantee that for any pair of ads $a_i, a_j$, if $i<j$ then $(a_j,a_i)\notin \E$. We can then prove that we can focus w.l.o.g. on 
\emph{ordered} allocations $\theta$, i.e., for any pair of allocated ads $a_i, a_j$, with $i<j$, $\theta(a_i) \leq \theta(a_j)$. 
Consider an unordered $\theta$ and let $a_i$ be the first
ad (from the top) for which there exists $a_j$, $i<j$, such that
$\theta(a_i) > \theta(a_j)$. Since $\gamma_{j,i}=0$ then all the ads $a_k$ s.t. $\theta(a_k) \geq \theta(a_i)$ have $CTR_k(\theta)=0$ and, therefore, we can prune $\theta$ of (i.e., substitute with $a_\bot$) $a_i$ and all the subsequent ads without any loss in the social welfare. But then in the class of ordered allocations, the optimum has an optimal substructure and we can use dynamic programming.
Let $D[i, m]$ be the value of the optimal ordered allocation that uses only slots $s_m, \ldots, s_K$ and allocates ad $a_i$ in $s_m$. It is not hard to see that $D[i,m] = \Lambda_m q_i v_i + \max_{j>i} \gamma_{i,j} D[j,m+1]$ and that the optimum is $\max_{i \in [N]} D[i,1]$. In the pseudo-code of the algorithm, we simply construct the table $D$ after the topological sort of the contextual graph (with renaming of the ads) is done. The algorithm runs in time  $O(KN^2)$.

\begin{algorithm}
\begin{algorithmic}[1]
\STATE $\textsc{TopologicalSort}(G)$ \label{s:par2o}
\FOR{all $m \leq K$} \label{s:lrow}
	\STATE $D[N,m] = \Lambda_m q_N v_N$ \label{s:endlrow}
\ENDFOR
\FOR{all $i \leq N$} \label{s:lcol}
	\STATE $D[i,K] = \Lambda_K q_i v_i$ \label{s:endlcol}
\ENDFOR
\FOR{$i = N -1$ to $1$} \label{s:table}
		\FOR{$m = K-1$ to $1$} 
		\STATE $D[i,m] = \Lambda_m q_i v_i + \max_{j>i}  \gamma_{i,j} D[j,m+1]$ \label{s:endtable}
	\ENDFOR	
\ENDFOR
\RETURN{$(\max_{i \in [N]} D[i,1])$} \label{s:max1slot}
\end{algorithmic}
\caption{}\label{alg:polytime_on_DAG}
\end{algorithm}

\noindent Since social welfare maximization is a utilitarian problem, and given that the algorithm above is optimal we can use the VCG mechanism to obtain a polynomial-time optimal truthful mechanism.

%

\subsection{Hardness} 
We now prove the hardness of approximating \caa$(K)$-nr.
\begin{theorem}
\caa$(K)$-nr is poly--APX--hard.
\end{theorem}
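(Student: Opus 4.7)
My plan is to give an approximation-preserving reduction from the (directed) Longest Path problem, which is known to be poly--APX--hard. Given a directed graph $G=(V,E)$, I construct an instance of \caa$(K)$-nr with $N=K=|V|$ ads, one per vertex, setting $q_i=v_i=1$ for all $i$ and $\Lambda_m=1$ for every $m\in\K$ (so slot-dependent externalities play no role). The contextual graph coincides with $G$: $\gamma_{i,j}=1$ if $(i,j)\in E$ and $\gamma_{i,j}=0$ otherwise, so the constructed instance actually sits in the \B--\caa$(K)$-nr subclass.

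The core claim I would prove is that $SW(\theta^\star)$ equals the length (number of vertices) of a longest simple path in $G$. The no-reset assumption $\gamma_{i,\bot}=\gamma_{\bot,i}=0$ is crucial here: whenever $a_\bot$ occupies some slot $s_\ell$, a factor $\gamma_{\cdot,\bot}$ or $\gamma_{\bot,\cdot}$ enters $\Gamma_j(\theta)$ for every ad $a_j$ with $\theta(a_j)>\ell$, zeroing its CTR; analogously, a zero edge $\gamma_{i,j}=0$ between two consecutive ads kills every ad allocated below them. Hence only the ads in a contiguous prefix $s_1,\ldots,s_m$ whose consecutive pairs are edges of $G$ can contribute positively to $SW(\theta)$. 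Since no ad is re-used, such a prefix is a \emph{simple} directed path of $G$, and each contributing ad has CTR exactly $1$; therefore $SW(\theta)$ equals the length of this induced simple path, and $SW(\theta^\star)$ equals the longest-path length of $G$.

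The reduction is then approximation-preserving in the natural way: from any $\alpha$-approximate $\theta$ I read off, in polynomial time, the associated simple path of length $SW(\theta)\geq SW(\theta^\star)/\alpha$, giving a factor-$\alpha$ approximation for Longest Path. Hence the poly--APX--hardness of Longest Path transfers to \caa$(K)$-nr.

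The main conceptual step is exploiting the no-reset condition to collapse the multiplicative CTR into an all-or-nothing indicator of path-extensibility; the rest is bookkeeping. A subtle point to confirm along the way is that no allocation interleaving $a_\bot$'s in the middle, or placing extra ads after a zero edge, can outperform its contiguous-prefix truncation---but this is precisely what the no-reset condition enforces.
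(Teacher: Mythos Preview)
Your proposal is correct and follows the same reduction as the paper: both build one ad per vertex with $q_i=v_i=1$, $\Lambda_m=1$, set $\gamma_{i,j}$ to the $0/1$ adjacency of $G$, and use the no-reset condition to argue that only an initial contiguous prefix forming a simple directed path can contribute to $SW$. The sole difference is bookkeeping---the paper measures path length in edges (so $SW(\theta^*)=len(\rho^*)+1$) and absorbs the additive $+1$ via an explicit factor-$2$ argument using a guaranteed $SW\geq 2$ solution, whereas you measure in vertices and claim exact preservation; either convention yields an approximation-preserving reduction up to a constant factor.
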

\begin{proof}
We reduce from the Longest Path problem. An instance of the Longest Path problem consists of a direct graph $G' = (T,A)$ where $T$ is the set of vertices of the graph and $A \neq \emptyset$ is the set of unweighted edges. The problem demands to compute a \emph{longest simple path}, i.e., a maximum length path that visits each vertex of the graph at most once.
This problem is poly--APX--complete~\cite{LONGESTPATH} and the best known asymptotic approximation is ${\log |T|}/{|T|}$. 
From an instance $G'=(T,A)$ of Longest Path we obtain an instance of \caa$(K)$-nr as follows. 
For each vertex $t_i\in T$ we add an ad $a_i$, with $q_i=v_i=1$ and for each directed arc $(t_i,t_j)\in A$ we add an arc $(i,j)$ in $\E$.
Furthermore, we set $\gamma_{i,j} = 1$ if $(i,j)\in \E$ and $\gamma_{i,j}=0$ otherwise.
Finally, we set $N=K=|T|$ and $\Lambda_m=1$, $\forall m \in [K]$.

Given an ordered sequence of vertices  $\rho = ( t_1, t_2, \ldots, t_N )$, we denote as $len(\rho)$ the length of the path that starts in $t_1$ and visits the nodes in $\rho$ till the first node $t_j$ s.t. $(t_j, t_{j+1}) \not \in A$ is reached.
Let us denote as $\rho^*$ the sequence that describes the longest path in $G'$ and as $\theta^*$ the allocation that maximizes the social welfare in the instance of \caa$(K)$-nr defined upon $G'$.
It is easy to check that $len(\rho^*) = SW (\theta^*) - 1$.
Indeed, $\theta^*$ allocates sequentially from the first slot the ads that correspond to the vertices composing the longest path. Conversely, we can transform an allocation $\theta$ into a sequence of vertices $\rho$ just by substituting the ads with their corresponding vertices until the first $a_\bot$ in $\theta$ is found. Thus, we have that for $\theta$ and the corresponding $\rho$ it holds $len(\rho) = SW(\theta) - 1$.

Consider a generic $\alpha$-approximate allocation $\theta_{\alpha}$ for \caa$(K)$-nr: $SW(\theta_{\alpha}) \geq \alpha SW(\theta^*)$. 
Since $A$ is non-empty, there is a solution $\theta_2$ to \caa$(K)$-nr of social welfare at least $2$. 
Let $\theta_{\beta}$ denote the solution in $\{\theta_\alpha, \theta_2\}$ with maximum social welfare. As $\theta_\alpha$ is an $\alpha$-approximate solution so is $\theta_{\beta}$. By letting $\rho_{\beta}$ denote the path constructed from $\theta_{\beta}$ as described above, we prove that the reduction preserves the approximation (up to a constant factor): 
$len(\rho_{\beta}) = SW(\theta_{\beta}) - 1 \geq \frac{1}{2} SW(\theta_{\beta}) \geq \frac{\alpha}{2} SW(\theta^*) =  \frac{\alpha}{2} \left(len\left(\rho^*\right) + 1\right) \geq \frac{\alpha}{2} len(\rho^*).$ \qed
%
\end{proof}
%

\subsection{Approximation algorithm} 
We show that the problem is in poly--APX, with an approximation ratio that is asymptotically the same as the best guarantee known for Longest Path. Our  algorithm combines the Color Coding (CC) algorithm \cite{colorcoding} together with three approximation steps.

Let $C$ be a set containing $K$ different colors. CC is a random algorithm, randomly assigning colors from $C$ to the ads, and then finding the best \emph{colorful} (i.e., no pair of ads has the same color) allocation. 
To find the best colorful allocation, given a random coloring we do the following. For $S\subseteq C$, we define $(S,a_i)$ as the set of partial allocations with the properties of having the same number $|S|$ of allocated ads (each colored with a different color of $S$) in the first $|S|$ slots and having ad~$a_i$ in slot~$s_{|S|}$. We start from $S=\emptyset$ where no ad is allocated. Then, allocating one of the ads in the first position, we add one color to $S$ until $S=C$. Iteratively, the algorithm extends the allocations in $(S, a_i)$ appending a new ad, say $a_j$, with a color not in $S$ in slot $s_{|S|+1}$ obtaining $(S \cup \{o_j\}, a_j)$ where $o_j$ is the color of $a_j$. Each partial allocation in $(S, a_i)$ is characterized by the values of $SW$ and $\Gamma_i$. We can safely discard all the Pareto dominated partial allocations: given two allocations $\theta_1$ and $\theta_2$ in $(S, a_i)$, we say that $\theta_2$ is Pareto dominated by $\theta_1$ iff $SW(\theta_1) \geq SW(\theta_2)$ and $\Gamma_i(\theta_1) \geq \Gamma_i(\theta_2)$. However, there is no guarantee that the number of allocations in $(S,a_i)$ is polynomially bounded and, in principle, all the generated $O(N^K)$ partial allocations may be Pareto efficient. The complexity per coloring is $O(2^KN^{K+1}K^2)$. CC generates $e^K$ random colorings, but it can be derandomized with a cost of $\log^2(N)$ and a total complexity $O((2e)^K K^2 N^{K+1} (\log N)^2)$. To make the algorithm polynomial, we apply three approximation steps. Initially, we briefly sketch these three approximations and, subsequently, we provide the details. Firstly, we run CC  over a reduced number $K'$ of slots where $K' = \min(\lceil\log (N)\rceil, K)$. Secondly, we discard all the allocations $\theta$ in which the probability to click on the last allocated ad is smaller than a given $\delta \in [0,1]$. Finally, we discretize the $\gamma_{i,j}$'s. We prove in the following that the running time is indeed polynomial and the approximation ratio is $(1-\delta)(1-\epsilon)\frac{\log (N)}{2 \min \{N,K\}}$, $\epsilon$ controlling the granularity of the $\gamma_{i,j}$ discretization. All the three approximations are necessary in order to obtain a polynomial-time algorithm. This algorithm is not monotone as we show below. However, a simple $1/K$-approximate truthful mechanism can be obtained, via a single-item second price auction. From here on, we provide the details of the algorithms and we prove its approximation ratio.

\smallskip \noindent \\underline{\emph{Approximation 1}.} We apply CC over a reduced number $K'$ of slots, where $K' = \min(\lceil\log (N)\rceil, K)$, implying the following approximation ratio.

\begin{proposition} \label{p:K'}
Given $\theta^*$, the optimal allocation over $K$ slots, and $\theta^*_{K'}$, the optimal allocation over the first $K'\leq \min\{N,K\}$ slots, we have $SW\left(\theta^*_{K'}\right) \geq \frac{1}{2} \frac{K'}{\min\{N,K\}}  SW\left(\theta^*\right)$.
\end{proposition}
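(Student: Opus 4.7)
The plan is to extract a good ``chunk'' of $\theta^*$ and use it as a feasible solution for the $K'$-slot problem. First I would invoke the no-reset assumption: since $\gamma_{\bot,i}=\gamma_{i,\bot}=0$ for all $i\in\N$, having $a_\bot$ in slot $s_l$ zeros out the CTR of every ad allocated below it, so we may assume $\theta^*$ places all its real ads in the top $M$ slots for some $M\leq \min\{N,K\}$ (at most $N$ real ads are available, and at most $K$ slots exist).

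Next I would partition the top $M$ slots of $\theta^*$ into $\lceil M/K'\rceil$ consecutive blocks of size at most $K'$. By an averaging argument, the block with maximum contribution to $SW(\theta^*)$ carries value at least $SW(\theta^*)/\lceil M/K'\rceil$. Since $K'=\min(\lceil\log N\rceil,K)\leq \min\{N,K\}$, if $M\geq K'$ then $\lceil M/K'\rceil \leq M/K'+1 \leq 2M/K' \leq 2\min\{N,K\}/K'$, so the best block contributes at least $\tfrac{K'}{2\min\{N,K\}}SW(\theta^*)$. The degenerate case $M<K'$ is immediate: $\theta^*$ itself already fits into $K'$ slots, giving $SW(\theta^*_{K'})\geq SW(\theta^*)$, which is more than enough.

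The core technical step is to turn the best block into a feasible $K'$-slot allocation $\theta'$ without losing welfare. If the block occupies slots $s_{j+1},\dots,s_{j+K''}$ of $\theta^*$ (with $K''\leq K'$), I would transplant the corresponding ads, in their original order, into slots $s_1,\dots,s_{K''}$ of the $K'$-slot instance and pad the rest with $a_\bot$. For the ad moved from $s_{j+m}$ to $s_m$, its new $\Gamma$-value is $\Lambda_m\prod_{l=1}^{m-1}\gamma_{i_{j+l},i_{j+l+1}}$, whereas its old one was $\Lambda_{j+m}\prod_{l=1}^{j+m-1}\gamma_{i_l,i_{l+1}}$. Taking the ratio, the new value exceeds the old by a factor of $\bigl(\prod_{l=m+1}^{j+m}\lambda_l\bigr)^{-1}\bigl(\prod_{l=1}^{j}\gamma_{i_l,i_{l+1}}\bigr)^{-1}\geq 1$, because all $\lambda_l$ and $\gamma_{i,j}$ lie in $[0,1]$. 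Hence each ad's CTR only grows when lifted to the top, and $SW(\theta')$ is at least the contribution of the chosen block in $\theta^*$.

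Combining the steps, the optimality of $\theta^*_{K'}$ on the $K'$-slot instance yields $SW(\theta^*_{K'})\geq SW(\theta')\geq \tfrac{K'}{2\min\{N,K\}}SW(\theta^*)$, as claimed. The main obstacle is the transplantation step: it is crucial that the aa-externalities factorize into a slot-prominence term $\Lambda_m$ and a chain of pairwise terms $\gamma_{i,j}$, so that both pieces can only \emph{increase} when the block is pushed upward; without this structure, moving ads up could alter their CTRs unpredictably. The remaining bookkeeping (the averaging argument and the $\lceil M/K'\rceil\leq 2M/K'$ estimate) is routine once the ``lifting preserves welfare'' lemma is in place.
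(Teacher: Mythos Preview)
Your proof is correct and follows the same approach as the paper: partition the (at most $\min\{N,K\}$) slots used by $\theta^*$ into consecutive blocks of length $K'$, average to find a heavy block, and observe that lifting that block to the top $K'$ slots can only increase each ad's CTR. The paper's write-up is terser---it partitions the first $\min\{N,K\}$ slots directly (rather than introducing your $M$) and condenses your explicit transplantation calculation into the single phrase ``by optimality, $SW(\theta^*_{K'}) \geq SW(\theta^*|G_i)$''.
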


\begin{proof}
We partition $K'' = \min\{N,K\}$ slots in groups of $K'$ consecutive slots. There could be remaining slots that will constitute the last group with less then $K'$ slots. The number of groups in which the $K$ slots are divided is $NG=\lceil \frac{K''}{K'} \rceil$. Let $G_i = \{(i-1)K'+1, \ldots, \min(i K', K)\}$, for $i \in [NG]$, be the $i$-th group of indices of $K'$ slots.

We let $SW(\theta|G_i) = \sum_{m \in G_i} \Lambda_m \Gamma_{\theta(m)}(\theta) q_{\theta(m)} v_{\theta(m)}$, for any $\theta \in \Theta$. Since $SW(\theta^*) = \sum_{i=1}^{NG} SW(\theta^*|G_i)$, there must exist a group $G_i$ s.t. $SW(\theta^*|G_i) \geq \frac{1}{NG} SW(\theta^*)$. Observing that $\lceil \frac{K''}{K'} \rceil \leq \frac{K''}{K'} + 1$ and $K' \leq K''$ we get $SW(\theta^*|G_i) \geq \frac{K'}{2K''} SW(\theta^*)$. The proof concludes by noting that, by optimality, $SW(\theta^*_{K'}) \geq SW(\theta^*|G_i)$. 
\end{proof}

\smallskip \noindent \underline{\emph{Approximation 2}.} In CC, we discard allocations $\theta$ in which $\Gamma_i(\theta)$ of the last allocated ad $a_i$, $i \in [N]$, is less than a given $\delta \in [0,1]$, implying the following approximation ratio.

\begin{proposition} \label{p:>=delta}
Given $\theta^*_{K'}$, the optimal allocation over $K'$ slots, and $\theta^{\delta}_{K'}$ the optimal allocation among the allocations $\theta \in \Theta$ where the last allocated ad $a_i$, $i \leq N$, satisfies $\Gamma_i(\theta) \geq \delta$, we have $SW\left(\theta^{\delta}_{K'}\right) \geq \left(1-\delta\right) SW\left(\theta^*_{K'}\right)$.
\end{proposition}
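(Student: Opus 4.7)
The plan is to start from the optimal unconstrained allocation $\theta^*_{K'}$ and construct a feasible allocation for the constrained problem by truncating its tail. Write $\theta^*_{K'} = (a_{i_1}, \ldots, a_{i_{m^0}}, a_\bot, \ldots, a_\bot)$ where $m^0$ is the slot of the last real ad. Under the no-reset aa model we have $\Gamma_{i_{m+1}}(\theta^*_{K'})/\Gamma_{i_m}(\theta^*_{K'}) = \lambda_{m+1}\gamma_{i_m,i_{m+1}} \in [0,1]$, so the sequence $\Gamma_{i_m}(\theta^*_{K'})$ is non-increasing in $m$; moreover $\Gamma_{i_1}(\theta^*_{K'}) = \Lambda_1 = 1 \geq \delta$. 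Thus the set of slots whose $\Gamma \geq \delta$ forms a prefix $\{1,\ldots,\bar m\}$ for some $\bar m\geq 1$.

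If $\bar m = m^0$, then $\theta^*_{K'}$ itself satisfies the constraint and the claim is immediate. Otherwise, I would define $\theta'$ to be the truncation of $\theta^*_{K'}$ to its first $\bar m$ ads (replacing slots $\bar m{+}1,\ldots,K'$ by $a_\bot$). Since externalities only propagate downward, the first $\bar m$ slots have exactly the same $CTR$ under $\theta'$ as under $\theta^*_{K'}$, so $\theta'$ is a feasible allocation whose last real ad $a_{i_{\bar m}}$ satisfies $\Gamma\geq\delta$. The goal then reduces to bounding the lost welfare $B \;=\; \sum_{m>\bar m}\Gamma_{i_m}(\theta^*_{K'})\,q_{i_m} v_{i_m}$ by $\delta\,SW(\theta^*_{K'})$.

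To accomplish this, I would introduce an auxiliary allocation $\theta^{alt}$ that shifts the discarded suffix to the top, placing $a_{i_{\bar m + 1}}, \ldots, a_{i_{m^0}}$ in slots $1, 2, \ldots, m^0-\bar m$. A direct expansion of the $\Gamma$ formula gives, for every $k \geq 1$,
\begin{equation*}
\Gamma_{i_{\bar m + k}}(\theta^*_{K'}) \;=\; \frac{\Lambda_{\bar m + k}}{\Lambda_k}\cdot P\cdot \Gamma_{i_{\bar m + k}}(\theta^{alt}), \qquad P := \prod_{l=1}^{\bar m}\gamma_{i_l,i_{l+1}}.
\end{equation*}
Since every $\lambda_j \in [0,1]$, the ratio $\Lambda_{\bar m + k}/\Lambda_k$ is at most $1$; and by the defining property of $\bar m$ we have $\Lambda_{\bar m + 1}\,P = \Gamma_{i_{\bar m + 1}}(\theta^*_{K'}) < \delta$. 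Combining these yields a per-slot bound of the form $\Gamma_{i_{\bar m + k}}(\theta^*_{K'}) \leq \delta\,\Gamma_{i_{\bar m + k}}(\theta^{alt})$, which on summation gives $B \leq \delta\,SW(\theta^{alt})$. The proof then concludes by invoking the optimality of $\theta^*_{K'}$ over all $K'$-slot allocations, so $SW(\theta^{alt}) \leq SW(\theta^*_{K'})$ and hence $B \leq \delta\,SW(\theta^*_{K'})$.

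The main technical obstacle is the last step: converting the aggregate constraint $\Lambda_{\bar m + 1}\,P < \delta$ into the clean per-slot inequality bounding $\Gamma_{i_{\bar m+k}}(\theta^*_{K'})$ against $\delta\,\Gamma_{i_{\bar m+k}}(\theta^{alt})$, since one must correctly match the $\Lambda_{\bar m+k}/\Lambda_k$ factor with the $\Lambda_{\bar m + 1}$ appearing in the denominator of the threshold bound on $P$. This is where I expect to spend most of the care, using the monotonicity of $\Lambda$ and the structure of the shifted allocation to make the factors line up.
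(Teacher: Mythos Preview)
Your approach is essentially the paper's: truncate $\theta^*_{K'}$ at the last slot $\bar m$ with $\Gamma\geq\delta$, shift the discarded suffix up to slot~$1$ to form an auxiliary allocation, and use optimality of $\theta^*_{K'}$ to bound the lost welfare by $\delta\,SW(\theta^*_{K'})$. The only cosmetic difference is that the paper factors out $\Gamma_{i_{\bar m+1}}(\theta^*_{K'})$ rather than $\delta$ from the suffix, writing $SW(\theta^*_{K'}|\{\bar m{+}1,\ldots,K'\})\le \Gamma_{i_{\bar m+1}}(\theta^*_{K'})\cdot SW(\theta^{alt})\le \Gamma_{i_{\bar m+1}}(\theta^*_{K'})\cdot SW(\theta^*_{K'})<\delta\,SW(\theta^*_{K'})$; this resolves the ``obstacle'' you flag, since the per-slot comparison then reduces to $\Lambda_{\bar m+k}\le\Lambda_k\,\Lambda_{\bar m+1}$ rather than the messier inequality mixing $\delta$ and $P$.
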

\begin{proof}
Consider the allocation $\theta^*_{K'}$ and assume that the last ad satisfying $\Gamma_i(\theta^*_{K'})\geq \delta$ is the one in slot $s_l$. 
Recalling the notation $SW(\theta|S)$ for $S \subseteq [K]$, provided in the proof of Proposition~\ref{p:K'}, by optimality of $\theta^*_{K'}$ we have $SW(\theta^*_{K'}) \geq \frac{1}{\Gamma_{\theta^*_{K'}(l+1)}} SW(\theta^*_{K'}|\{l+1,\ldots,K\})$. Indeed, on the r.h.s. we have a lower bound on the social welfare that the ads allocated by $\theta^*_{K'}$ in slots $s_{l+1}, \ldots, s_{K'}$ would have if shifted to the first slot. If this were bigger than $SW(\theta^*_{K'})$ then $\theta^*_{K'}$ would not be optimal. But then since $\Gamma_{\theta^*_{K'}(l+1)} < \delta$, we have
$\delta SW(\theta^*_{K'}) \geq SW(\theta^*_{K'}|\{l+1,\ldots,K\})$.

Finally we have that $\theta^{\delta}_{K'}$, the allocation that removes from $\theta^*_{K'}$ the ads allocated from $s_{l+1}$ to $s_{K'}$, has $SW(\theta^{\delta}_{K'}) = SW(\theta^*_{K'}) - SW(\theta^*_{K'}|\{l+1,\ldots,K\})
\geq SW(\theta^*_{K'}) - \delta SW(\theta^*_{K'}) = (1-\delta) SW(\theta^*_{K'})$. 
\end{proof}

\smallskip \noindent \underline{\emph{Approximation 3}.} In CC, we use rounded values for $\gamma_{i,j}$.
More precisely, we use $\lfloor \frac{1}{\tau}\log \frac{1}{\gamma_{i,j}} \rfloor$ in place of $\log \frac{1}{\gamma_{i,j}}$, where the normalization constant $\tau$ is defined below. 
The constraint due to Proposition~\ref{p:>=delta} is now a capacity constraint of the form $\sum_{m \in [K]: m < l} \lfloor \frac{1}{\tau}\log \frac{1}{\gamma_{\theta(m),\theta(m+1)}} \rfloor \leq \lfloor \frac{1}{\tau}\log \frac{1}{\delta} \rfloor$.
Notice that, with rounded values, the capacity can assume a finite number of values (i.e., $\lfloor \frac{1}{\tau}\log \frac{1}{\delta} \rfloor$) and therefore we can now bound the number of allocations to be stored in $(S,a_i)$.
More precisely, for each value of capacity, we can discard all the allocations except one maximizing the social welfare measured with rounded values. This step has the following consequences on the approximation guarantee.

\begin{proposition} 
Given $\theta^\delta_{K'}$, defined as in Proposition \ref{p:>=delta}, and $\theta^{\delta\epsilon}_{K'}$, the optimal allocation when the rounding procedure is applied, we have that, choosing $\tau = \frac{1}{K'}\log \frac{1}{1 - \epsilon}$, $SW\left(\theta^{\delta\epsilon}_{K'}\right) \geq \left(1-\epsilon\right) SW\left(\theta^{\delta}_{K'}\right)$.
\end{proposition}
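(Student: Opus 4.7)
The plan is to bound the social welfare of $\theta^{\delta\epsilon}_{K'}$ from below by comparing both allocations through the ``rounded'' welfare that the algorithm actually optimizes. I will write $\bar{\gamma}_{i,j} := \exp\bigl(-\tau \lfloor (1/\tau)\log(1/\gamma_{i,j})\rfloor\bigr)$ and denote by $SW'$ the social welfare computed with $\bar{\gamma}$ in place of $\gamma$; the factors $\Lambda_m$ remain exact, as they are not rounded.

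First, I would record two pointwise inequalities that hold for every allocation $\theta$ filling at most $K'$ slots. Since $\lfloor g/\tau\rfloor \le g/\tau$, truncation in $\log$-space only increases each $\gamma$, so $\bar\gamma_{i,j} \ge \gamma_{i,j}$ and consequently $SW'(\theta) \ge SW(\theta)$. On the other hand, the per-edge rounding error is at most $\tau$, i.e.\ $\bar\gamma_{i,j} \le e^{\tau}\gamma_{i,j}$; since each $\Gamma_i(\theta)$ involves at most $K'-1$ edges from the contextual graph, multiplying the per-edge bounds gives $SW'(\theta) \le e^{\tau(K'-1)} SW(\theta)$.

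Next, I would verify that $\theta^{\delta}_{K'}$ is a feasible candidate for the capacity-constrained optimization that produces $\theta^{\delta\epsilon}_{K'}$, i.e.\ that it satisfies $\sum_m \lfloor (1/\tau)\log(1/\gamma_{\theta(m),\theta(m+1)})\rfloor \le \lfloor (1/\tau)\log(1/\delta)\rfloor$. By Proposition~\ref{p:>=delta}, $\theta^{\delta}_{K'}$ satisfies $\prod_m \gamma_{\theta(m),\theta(m+1)} \ge \delta$, which in $\log$-form reads $\sum_m \log(1/\gamma_{\theta(m),\theta(m+1)}) \le \log(1/\delta)$; dividing by $\tau$ and using that the sum of floors is an integer no larger than the (unrounded) right-hand side yields the required bound.

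Finally, I would chain the inequalities: by optimality of $\theta^{\delta\epsilon}_{K'}$ under $SW'$ over the rounded-feasible set we have $SW'(\theta^{\delta\epsilon}_{K'}) \ge SW'(\theta^{\delta}_{K'}) \ge SW(\theta^{\delta}_{K'})$, and combining with $SW(\theta^{\delta\epsilon}_{K'}) \ge e^{-\tau(K'-1)} SW'(\theta^{\delta\epsilon}_{K'})$ gives
\[
SW(\theta^{\delta\epsilon}_{K'}) \;\ge\; e^{-\tau(K'-1)}\, SW(\theta^{\delta}_{K'}).
\]
Substituting $\tau = (1/K')\log(1/(1-\epsilon))$ produces the prefactor $(1-\epsilon)^{(K'-1)/K'} \ge 1-\epsilon$, which is exactly the claimed ratio. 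The only delicate point, and where I would be most careful, is keeping the direction of the rounding straight: truncating in $\log$-space \emph{inflates} each $\bar\gamma_{i,j}$ relative to $\gamma_{i,j}$, so both the feasibility of $\theta^{\delta}_{K'}$ and the inequality $SW' \ge SW$ come for free, while the entire approximation loss is absorbed into the factor $e^{\tau(K'-1)}$ controlled by the choice of $\tau$.
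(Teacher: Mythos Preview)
Your argument is correct and follows essentially the same route as the paper: compare both allocations through the rounded welfare $SW'$, invoke the optimality of $\theta^{\delta\epsilon}_{K'}$ for $SW'$ together with the feasibility of $\theta^{\delta}_{K'}$ under the rounded capacity constraint, and absorb the per-edge rounding error into the global factor $(1-\epsilon)$ via the choice of~$\tau$. Your execution is if anything cleaner---you make the feasibility check explicit (the paper omits it) and you sidestep the paper's intermediate ``$\lfloor\cdot\rfloor+1$'' objective, whose optimality step is stated somewhat loosely there.
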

\begin{proof} 
Let $\xi^{x}_{m, m+1}$ be a shorthand for $\log \frac{1}{\gamma_{\theta_{K'}^{x}(m),\theta_{K'}^{x}(m+1)}}$ and $x(i)$ be a shorthand for $\theta^x_{K'}(a_i)$, for $x \in \{\delta \epsilon,\delta \}$.
By definition:
\begin{align*}
SW\left(\theta^{\delta\epsilon}_{K'}\right) & = \sum\limits_{i \in [N]} \Lambda_{{\delta\epsilon}(i)} \Gamma_{i}\left(\theta^{\delta\epsilon}_{K'}\right) q_i v_i 
\\ & = \sum\limits_{i \in [N]} \Lambda_{{\delta\epsilon}(i)} 
\prod_{m < {\delta\epsilon}(i)} 2^{-\xi^{\delta \epsilon}_{m, m+1}} 
q_i v_i.
\end{align*}

\noindent Since $\xi^{\delta \epsilon}_{m, m+1} \leq \tau (\lfloor \frac{1}{\tau}\xi^{\delta \epsilon}_{m, m+1} \rfloor +1)$, we then have 
\begin{align*}
SW\left(\theta^{\delta\epsilon}_{K'}\right) & \geq \sum\limits_{i \in [N]} \Lambda_{{\delta \epsilon}(i)} \prod_{m < {\delta\epsilon}(i)} 2^{-\tau\left( \left \lfloor \frac{1}{\tau} \xi^{\delta \epsilon}_{m, m+1} \right \rfloor + 1  \right)} q_i v_i\\
& \geq \sum\limits_{i \in [N]} \Lambda_{{\delta}(i)} \prod_{m < {\delta}(i)} 2^{-\tau\left( \left \lfloor \frac{1}{\tau}\xi^{\delta}_{m, m+1} \right \rfloor + 1  \right)} q_i v_i,
\end{align*}

\noindent where the latter inequality follows from optimality of $\theta^{\delta}_{K'}$. Given that $\lfloor y \rfloor \leq y$ we can conclude that $SW\left(\theta^{\delta\epsilon}_{K'}\right)$ is bounded from below by:
\begin{align*} 
& \sum\limits_{i \in [N]} \Lambda_{\delta(i)}  \left(\prod_{m < {\delta}(i)} 2^{\log \gamma_{\theta_{K'}^{\delta}(m),\theta_{K'}^{\delta}(m+1)} -\tau }\right) q_i v_i\\
& \geq 2^{-K'\tau}\cdot \sum\limits_{i} \Lambda_{{\delta}(i)}  \Gamma_i\left(\theta_{K'}^\delta\right) q_i v_i\\
& = (1-\epsilon)\cdot \sum\limits_{i} \Lambda_{{\delta}(i)} \Gamma_i\left(\theta_{K'}^\delta\right) q_i v_i =\left(1-\epsilon\right) SW\left(\theta^{\delta}_{K'}\right). 
\end{align*}
\noindent This concludes the proof.  
\end{proof}

The approximation ratio of the algorithm is thus $(1-\delta)(1-\epsilon)\frac{\log (N)}{2 \min \{N,K\}}$, asymptotically the same as the best known approximation ratio of the Longest Path once $N=K$. The complexity instead can be derived as follows. The maximum number of allocations that can be stored in each $(S, a_i)$ is  $O(\frac{\log \frac{1}{\delta}}{\tau})$ with $\tau = \frac{\log \frac{1}{1 - \epsilon}}{K'}$ thanks to dominations. Thus, given that $\log(\frac{1}{1-\epsilon}) \rightarrow \epsilon$ as $\epsilon \rightarrow 0$, the number of elements is $O(K' \frac{1}{\epsilon})$. Thus, the complexity when $K' = \log (N)$ is $O((2e)^{\log (N)} \frac{1}{\epsilon}  \log(\frac{1}{\delta})N^2  \log^4 (N))=O(\frac{1}{\epsilon\delta}N^3  \log^4 (N))$.

Notice that all the three above approximations are necessary in order to obtain a polynomial--time algorithm. Approximation~2 and Approximation~3 allow us to bound the number of the allocations stored per pair $(S,a_i)$ and would lead, if applied without Approximation~1, to a complexity $O((2e)^KK^2N^2\log^2(N)\frac{1}{\epsilon\delta})$. Notice also that, without Approximation~2, the possible values for the capacity are not upper bounded. Approximation~1 allows us to remove the exponential dependence on $K$ and to obtain polynomial complexity.

\subsubsection*{Non--monotonicity of the approximation algorithm} 


In this section we prove that the 
algorithm is not monotone and therefore we cannot augment it with a payment function to obtain a truthful mechanism.

Let us initially consider the case where Approximation~1 is not used, therefore all the $K$ slots can be allocated. We will discuss below how to extend the proof to the case where Approximation~1 is used.

Consider the following instance of \csa$(K)$-nr:
\begin{itemize}
\item $K=3$ slots;
\item $N=4$ ads, where $q_1 v_1 = 2^{2\tau} \frac{\Lambda_2 -\Lambda_3 2^{-6\tau}}{\Lambda_2 - \Lambda_3} + 3$, $q_2 v_2 = x$, $q_3 v_3 = q_4 v_4 = 1$, where $\tau$ is the generic rounding factor of Approximation~3;
\item the contextual graph is s.t. $\gamma_{i,j}=0$ $\forall i,j \in [N]$ except: $\gamma_{1,2} = 2^{\left(-4 + \phi\right) \tau}$, $\gamma_{1,3} = 2^{- \tau}$, $\gamma_{2,4} = 2^{-\tau}$, $\gamma_{3,2} = 2^{- \tau}$. $\phi$ is a small number;
\item the rounded capacity $\left \lfloor \frac{\log \frac{1}{\gamma_{i,j}}}{\tau} \right \rfloor = +\infty$ $\forall i,j \in [N]$ except: $\left \lfloor \frac{\log \frac{1}{\gamma_{1,2}}}{\tau} \right \rfloor = 3$, $\left \lfloor \frac{\log \frac{1}{\gamma_{1,3}}}{\tau} \right \rfloor = 1$, $\left \lfloor \frac{\log \frac{1}{\gamma_{2,4}}}{\tau} \right \rfloor = 1$, $\left \lfloor \frac{\log \frac{1}{\gamma_{32}}}{\tau} \right \rfloor = 1$.
\item the $K$ colours are $\{o_1, o_2, o_3\}$.
\end{itemize}

The product $q_1 v_1$ has been chosen s.t., when $x$ is in the neighbourhood of $2^{2\tau} \frac{\Lambda_2 - \Lambda_3 2^{-4\tau}}{\Lambda_2 - \Lambda_3}$, $a_1$ is always allocated in the first slot. Thus, we can focus only on the colouring that assigns colour $o_1$ to $a_1$, $o_2$ to $a_2$ and $o_3$ to $a_3$ and $a_4$. Indeed, with this colouring the two longest path of the contextual graph are colourful, i.e. the unique two colourful allocations are $\theta_1= ( a_1, a_3, a_2 )$ in the set $(\{o_1,o_2,o_3\}, a_2)$ and $\theta_2= ( a_1, a_2, a_4 )$ in the set $(\{o_1,o_2,o_3\}, a_4)$.

Notice that, with this colouring, all the allocations where there is a pair of ads $(a_i,a_j)$ with $\gamma_{i,j}=0$ are infeasible, not satisfying the capacity bound.
We will now prove that the approximation algorithm is not monotone with respect to $a_2$.

Let us denote by $\widetilde{SW}$ the social welfare computed on the basis of the rounded values.
It is easy to check that the following hold: $\widetilde{SW}(\theta_1) = 2^{2\tau} \frac{\Lambda_2 -\Lambda_3 2^{-6\tau}}{\Lambda_2 - \Lambda_3} + 3 + \Lambda_2 2^{-4\tau} x + \Lambda_3 2^{-6\tau}$ and $\widetilde{SW}(\theta_2) = 2^{2\tau} \frac{\Lambda_2 -\Lambda_3 2^{-6\tau}}{\Lambda_2 - \Lambda_3} + 3 + \Lambda_2 2^{-\tau} + \Lambda_3 2^{-4\tau} x$.
Notice that the rounded $CTR_2$ in $\theta_2$ is always greater than the one in $\theta_1$, given $\Lambda_2 \geq \Lambda_3$, while $CTR_2(\theta_1) = \Lambda_3 2^{-2 \tau} > \Lambda_2 2^{\left(-4 + \phi\right)\tau} = CTR_2(\theta_2)$ when $\frac{\Lambda_2}{\Lambda_3} < 2^{2\tau - \phi \tau}$.

We have that $\widetilde{SW}(\theta_1) > \widetilde{SW}(\theta_2)$ when $x > 2^{2\tau} \frac{\Lambda_2 - \Lambda_3 2^{-4\tau}}{\Lambda_2 - \Lambda_3}$.
Thus $a_2$ gets a lower CTR by increasing her bid, which proves that the algorithm is not monotone.

The example can be extended also to the case where Approximation~1 is applied introducing ads with $qv=0$ and $\gamma_{i,j} = 0$, s.t. $\log N = K$.

\section{\caa$(K)$-r is APX-complete}

In this section we will prove the APX-hardness of \caa$(K)$-r and provide a $1/2$-approximation algorithm. 

\subsection{Hardness}
In this section we prove that
\caa$(K)$-r is APX--hard. 

\begin{theorem}\label{thm:CNFE_inapproximability}
\caa$(K)$-r cannot be approximated within a factor of $\frac{1}{1+\alpha}$, for $\alpha < \frac{1}{412}$, unless $P=NP$.
\end{theorem}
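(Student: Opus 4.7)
The plan is to reduce from a restricted ATSP variant with an explicit constant inapproximability gap, in line with the authors' announcement that the hardness follows ``via a reduction from (a subclass of) ATSP''. Natural candidates are $(1,2)$-ATSP or Min-ATSP with triangle inequality. The structural reason this reduction is natural is that, with window $c=K$, the social welfare
\[
SW(\theta)=\sum_{i\in\N}q_iv_i\Lambda_{\theta(a_i)}\prod_{l=1}^{\theta(a_i)-1}\gamma_{\theta(s_l),\theta(s_{l+1})}
\]
involves only \emph{consecutive} $\gamma$ factors along the allocation, so maximising $SW$ is essentially a Hamiltonian-path problem on the contextual graph with multiplicative arc weights.

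Given an instance $G'=(T,A)$ of the source ATSP variant, I would construct a \caa$(K)$-r instance with $N=K=|T|$, $q_i=v_i=1$, $\lambda_m=1$ (so $\Lambda_m=1$), and $\gamma_{i,j}$ encoding the ATSP costs: ``light'' (low-cost) arcs get $\gamma_{i,j}=1$ while ``heavy'' arcs get $\gamma_{i,j}=\beta$ for a constant $\beta\in(0,1)$ fixed during the analysis. The first sub-claim is that no approximately optimal allocation uses $a_\bot$ in a real slot: inserting $a_\bot$ sacrifices one unit $q_iv_i$ of welfare in exchange for replacing at most one heavy arc by two trivial $\gamma=1$ arcs, and a direct calculation shows this trade-off is unfavourable whenever $\beta$ is bounded away from $0$ (this is where the reset model, through $\gamma_{\bot,i}=\gamma_{i,\bot}=1$, has to be handled explicitly). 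Hence any optimum is a Hamiltonian ordering of $G'$, and $SW$ decomposes as a sum over prefixes of that ordering, with the $m$-th prefix contributing $\beta^{k_m}$ where $k_m$ counts the heavy arcs in it.

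The second step is the gap-preserving analysis. From the source problem's $(1+\alpha_0)$-gap promise, YES-instances admit orderings with at most $k_Y$ heavy arcs (hence $SW(\theta^\ast)\geq f_Y$) while on NO-instances every ordering has at least $k_N>k_Y$ heavy arcs (hence $SW(\theta^\ast)\leq f_N$). A careful choice of $\beta$ then makes $f_Y/f_N\geq 1+\alpha$ for every $\alpha<1/412$, so that any $(1+\alpha)$-approximation for \caa$(K)$-r in this range would separate YES- from NO-instances of the source problem, contradicting $P\neq NP$. The specific constant $1/412$ should emerge from composing the source problem's inapproximability factor with the explicit slackness in converting additive heavy-arc counts into multiplicative $\beta$-products.

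The main obstacle is this last quantitative step: heavy arcs do not contribute uniformly to $SW$, since an arc placed in slot $m$ damps every ad in slots $m+1,\ldots,K$ but leaves earlier ads untouched. I would attack this either by (i) proving that in the reduced instances heavy arcs must occupy canonical positions in any near-optimal ordering, or (ii) padding the source instance with fixed prefix/suffix gadgets of light arcs so that all orderings distribute heavy arcs comparably. The explicit constant $1/412$ should then fall out of balancing $\beta$ against the source gap and the padding size, while simultaneously certifying that the no-use of $a_\bot$ argument from the first step still applies to any candidate $(1+\alpha)$-approximate allocation and not just the exact optimum.
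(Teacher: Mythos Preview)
Your instinct to reduce from $(1,2)$-ATSP is right, but two choices in your encoding diverge from the paper's and create exactly the obstacle you flag at the end.

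First, the paper sets $\gamma_{i,j}=0$ (not some $\beta\in(0,1)$) for heavy arcs. With binary $\gamma$'s and $c=K$, every ad's contribution to $SW$ is itself $0$ or $1$: it equals $1$ iff no $0$-edge occurs anywhere earlier in the allocation. The non-uniform damping you worry about therefore never arises. Second, and this is the key trick, the paper sets $K=\mathrm{cost}(\tau^*)$, not $K=N$. With $K=\mathrm{cost}(\tau^*)\geq N$ slots available, the optimal allocation attains $SW(\theta^*)=N$ exactly, by laying the $N$ ads along the optimal tour and inserting $a_\bot$ wherever that tour uses a weight-$2$ arc; since the reset model gives $\gamma_{i,\bot}=\gamma_{\bot,j}=1$, every real ad still contributes $1$. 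Your first sub-claim is therefore aimed in the wrong direction: rather than arguing $a_\bot$ away, the paper's reduction \emph{relies} on it, and this is precisely where the reset model enters.

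From here the analysis is combinatorial rather than analytic. A $\tfrac{1}{1+\alpha}$-approximate allocation has at least $\lceil\tfrac{N}{1+\alpha}\rceil$ real ads each contributing $1$; a pigeonhole count of these against the at most $\mathrm{cost}(\tau^*)-\tfrac{N}{1+\alpha}$ copies of $a_\bot$ in the $K$ slots forces at least $\tfrac{2N}{1+\alpha}-\mathrm{cost}(\tau^*)-1$ consecutive real-ad pairs, each of which must be a weight-$1$ arc. Completing the sequence of real ads to a Hamiltonian tour and bounding its cost yields a $(1+2\alpha)$-approximation of $\tau^*$, so $\alpha<\tfrac{1}{412}$ would contradict the $\tfrac{1}{206}$ inapproximability of $(1,2)$-ATSP. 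The unknown value $\mathrm{cost}(\tau^*)$ is handled by running the hypothetical approximation for every $K\in\{N,\ldots,2N\}$ and taking the best resulting tour. Your multiplicative-$\beta$ encoding might be salvageable, but it would require exactly the positional analysis you leave unresolved; the paper's binary encoding together with the choice $K=\mathrm{cost}(\tau^*)$ sidesteps that difficulty entirely.
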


\begin{proof}
We reduce from the Asymmetric TSP with weights in $\{1,2\}$, hereinafter denoted as $ATSP(1,2)$.
The $ATSP(1,2)$ problem demands finding a minimum cost Hamiltonian tour in a complete directed weighted graph $G'=(T,A)$ where $T$ is the set of nodes of $G'$, $A$ is the set of edges and the weight function $w_{i,j}\in \{1,2\}$ for all edges $(i,j)\in A$.
$ATSP(1,2)$ cannot be approximated in polynomial time within a factor of $\frac{1}{1+\beta}$, with $\beta<1/206$~\cite{ATSP1_2}. 
Below, we denote as $\tau$ a solution of an $ATSP(1,2)$ instance, as $cost(\tau)$ its cost and as $\tau^*$ the optimal tour. 

Given an instance of $ATSP(1,2)$ on graph $G'=(T,A)$ we construct an instance of \caa$(K)$-r as follows: (\emph{i}) for each vertex $t_i \in T$ we generate an ad $a_i$ with $q_i=v_i=1$, then we have $N=|T|$; (\emph{ii}) the contextual graph is $G=([N],\E)$, where $(i,j)\in \E$ iff $w_{i,j}=1$; (\emph{iii}) for all $(i,j)\in \E$, $\gamma_{i,j} = 1$; and finally (\emph{iv}) the number of slots is equal to the cost of the optimal tour $\tau^*$ in $ATSP(1,2)$, i.e. $K = cost(\tau^*)$.
We will show at the end of the proof how we can deal with the fact that we do not know $cost(\tau^*)$. Observe that with $K=cost(\tau^*)$, we have $SW(\theta^*) = N$, $\theta^*$ denoting the optimal solution of the \caa$(K)$-r instance constructed.
The definition of the reduction is completed by observing that an allocation $\theta$ for the \caa$(K)$-r that allocates all the $N$ ads can be easily mapped back to a tour $\tau$ for the $ATSP(1,2)$ by simply substituting the ad with the corresponding vertex of the graph $G'$.

Let us suppose for the sake of contradiction that there exists a $\frac{1}{1+\alpha}$-approximate algorithm for \caa$(K)$-r, with $\alpha<\frac{\beta}{2}<\frac{1}{412}$.
Let $\theta_{\alpha}$ be the $\frac{1}{1+\alpha}$--approximate solution returned by such an algorithm, i.e., $SW(\theta_{\alpha}) \geq \frac{1}{1+\alpha} SW(\theta^*) = \frac{N}{1+\alpha}$.
It is easy to check that $\theta_\alpha$ consists of $\lceil \frac{N}{1+\alpha} \rceil$ ads, each providing a contribution of 1 to the social welfare, while there are $SW(\theta^*) - \lceil \frac{N}{1+\alpha} \rceil$ ads that w.l.o.g. we can consider empty.
Moreover, being $\alpha < 1$, $\frac{N}{1+\alpha} \geq cost(\tau^*) - \frac{N}{1+\alpha}$ holds.
For the sake of conciseness, hereinafter we omit the ceiling notation.
Let $\tau_\beta$ be the tour obtained from $\theta_\alpha$.
We state that in $\tau_{\beta}$ there are, at least, $\frac{2N}{1+\alpha} - cost(\tau^*) - 1$ edges of weight 1.
Divide the ads allocated in $\theta_\alpha$ in two sets: the $\frac{N}{1+\alpha}$ allocated ads $a_i\ i \in [N]$ and $a_{\bot}$.
Allocate in alternation one of the $\frac{N}{1+\alpha}$ ads $a_i$, with $i \in [N]$, and one of the $cost(\tau^*) - \frac{N}{1+\alpha}$ ads $a_\bot$.
When the slot index $2(cost(\tau^*) - \frac{N}{1+\alpha})$ is reached, the available $a_\bot$ are finished, thus, in the following $cost(\tau^*) - 2(cost(\tau^*) - \frac{N}{1+\alpha}) = \frac{2N}{1+\alpha} - cost(\tau^*)$ slots, only non-fictitious ads $a_i$, $i \in [N]$, are consecutively allocated (no slots are left empty).
This means that in $\theta_\alpha$, where the ads are disposed in a different way, we still have the guarantee that there are $\frac{2N}{1+\alpha} - cost(\tau^*) - 1$ pairs of consecutive ads $(a_i,a_j)$ s.t. $\gamma_{i,j}=1$.
Thus, in the tour $\tau_{\beta}$ there are, at least, $\frac{2N}{1+\alpha} - cost(\tau^*) - 1$ edges of weight 1.
Therefore, given that a tour is composed of $N$ edges, in $\tau_{\beta}$ there can be at most $N -\frac{2N}{1+\alpha} + cost(\tau^*) + 1$ edges of weight 2.
The length of $\tau_{\beta}$ is upper-bounded by $cost(\tau_{\beta}) \leq \frac{2N}{1+\alpha} - cost(\tau^*) - 1 + 2 (N -\frac{2N}{1+\alpha} + cost(\tau^*) + 1) = cost(\tau^*) + \frac{2N\alpha}{1+\alpha} + 1$.
Now we can state:
$
cost(\tau_\beta) \leq cost(\tau^*) + \frac{2\alpha N}{1+\alpha} + 1
 \leq cost(\tau^*)+2\alpha N 
 \leq cost(\tau^*)+2\alpha\, cost(\tau^*)
 = (1+2\alpha)\, cost(\tau^*)
 < (1+\beta)\, cost(\tau^*),
$
where: (i) the second inequality holds for $N\geq \frac{1+\alpha}{2\alpha^2}$; (\emph{ii}) the third inequality holds since $N\leq cost(\tau^*)$ and (\emph{iii}) the last inequality holds since, by assumption, $\alpha<\frac{\beta}{2}$.
Thus, for the instances where $N\geq \frac{1+\alpha}{2\alpha^2}$ if there were an algorithm that $\frac{1}{1+\alpha}$--approximates \caa$(K)$-r with $\alpha<\frac{1}{412}$, there would be a  $\frac{1}{1+\beta}$ approximation of $ATSP(1,2)$ with $\beta < \frac{1}{206}$. We obtained an absurd.

We finally show that we can deal with the non existence of the oracle returning $cost(\tau^*)$. For all the instances of $ATSP(1,2)$ with $N$ vertices, $N \leq cost(\tau^*) \leq 2N$. So, 
we run the polynomial $\frac{1}{1+\alpha}$--approximation algorithm of \caa$(K)$-r for all the values $K=m$ with $m \in \{N \ldots, 2N\}$, obtain $m$ 
tours $\tau_{\beta}^m$ and 
set $\tau_{\beta} = \arg\min_{ m \in \{N,\ldots,2N\}} cost(\tau_{\beta}^m)$,  guaranteeing $cost(\tau_{\beta})\leq cost(\tau_{\beta}^{cost(\tau^*)})$.   
\end{proof} 

\subsection{$\frac{1}{2}$-Approximate Greedy Algorithm for \caa$(c)$-r, for any $c$}. The algorithm  orders the ads in nonincreasing order of $q_i v_i$ and allocates them in the odd slots, starting from the one with the highest product; even slots are left empty. 

\begin{proposition}\label{prop:greedy}
The greedy algorithm above is $\frac{1}{2}$-approxi-mate for \caa$(c)$-r, for any $c$.
\end{proposition}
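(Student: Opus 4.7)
The plan is to (i) exploit the reset semantics to compute $SW(\theta_G)$ of the greedy allocation $\theta_G$ in closed form, and (ii) charge each slot of the optimum to a slot of the greedy, losing a factor of at most two. Crucially, both bounds will be independent of $c$, so the argument will apply uniformly to any window size.

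First I would evaluate $SW(\theta_G)$. Since $\theta_G$ places $a_\bot$ in every even slot and under the reset semantics $\gamma_{\bot,\cdot}=\gamma_{\cdot,\bot}=1$, every consecutive pair $(\theta_G(s_l),\theta_G(s_{l+1}))$ in the greedy allocation contains $a_\bot$. Hence the product $\prod_l \gamma_{\theta_G(s_l),\theta_G(s_{l+1})}$ inside $\Gamma_i(\theta_G)$ equals $1$, regardless of the window size $c$, and $CTR_i(\theta_G)=q_i\,\Lambda_{\theta_G(a_i)}$ for every allocated ad. Sorting the products $q_iv_i$ non-increasingly as $r_1\geq r_2\geq\cdots\geq r_N$, the greedy places $r_j$ into slot $s_{2j-1}$ and
$$SW(\theta_G)=\sum_{j=1}^{M'}\Lambda_{2j-1}\,r_j,\qquad M':=\min\{\lceil K/2\rceil,N\}.$$

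Second, I would upper-bound $SW(\theta^*)$. Since all $\gamma_{i,j}\in[0,1]$, for any allocation $\theta$ we have $CTR_i(\theta)\leq q_i\Lambda_{\theta(a_i)}$, so $SW(\theta^*)\leq \sum_{i:\theta^*(a_i)\neq\bot} r_i\,\Lambda_{\theta^*(a_i)}$. Because the $\Lambda_m$'s are non-increasing in $m$, the rearrangement inequality tells us that this right-hand side is maximized by placing the top ads in the top slots, giving
$$SW(\theta^*)\leq \sum_{m=1}^{M}\Lambda_m\,r_m,\qquad M:=\min\{K,N\}.$$

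Finally, I would compare the two sums by charging each OPT term to a distinct greedy term. For each $j$ with $2j-1\leq M$, we have $\Lambda_{2j-1}r_{2j-1}\leq \Lambda_{2j-1}\,r_j$ since $r_j\geq r_{2j-1}$; for each $j$ with $2j\leq M$, we have $\Lambda_{2j}\,r_{2j}\leq \Lambda_{2j-1}\,r_j$ since $\Lambda_{2j}\leq\Lambda_{2j-1}$ and $r_j\geq r_{2j}$. The index $j$ ranges in both cases only up to $\lceil M/2\rceil\leq M'$, so the greedy term $\Lambda_{2j-1}r_j$ is available for each such $j$. Summing yields $SW(\theta^*)\leq 2\,SW(\theta_G)$, completing the proof. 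I do not foresee a substantive obstacle; the only care needed is some routine bookkeeping when $M$ is odd or $N<\lceil K/2\rceil$, neither of which introduces new ideas.
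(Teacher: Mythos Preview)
Your proposal is correct and follows essentially the same approach as the paper: both evaluate $SW(\theta_G)$ using the reset semantics, upper-bound $SW(\theta^*)$ by $\sum_m \Lambda_m r_m$ (dropping the $\gamma$-factors and putting the best ads in the best slots), and then pair each greedy term $\Lambda_{2j-1}r_j$ with the two OPT terms $\Lambda_{2j-1}r_{2j-1}$ and $\Lambda_{2j}r_{2j}$. The only cosmetic difference is that the paper routes the charging through the intermediate inequality $\Lambda_{2m-1}r_m\geq\Lambda_{2m-1}r_{2m-1}\geq\tfrac12(\Lambda_{2m-1}r_{2m-1}+\Lambda_{2m}r_{2m})$, whereas you bound both OPT terms directly; you are also a bit more careful with the boundary cases $M=\min\{K,N\}$ and $M'=\min\{\lceil K/2\rceil,N\}$.
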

\begin{proof}
Let $\theta_{.5}$ be the allocation obtained by the algorithm. We want to prove that $SW(\theta_{.5}) \geq SW (\theta^*)/2$. W.l.o.g., rename the ads so that $q_1 v_1 \geq q_2 v_2 \geq \ldots \geq q_N v_N$. Let $K'= \left \lceil {K/2} \right \rceil$. We have $SW(\theta_{.5}) = \sum_{m \in [K']} \Lambda_{2m-1} q_m v_m$. On the other hand, $SW(\theta^*) \leq \sum_{m \in [K]} \Lambda_m q_m v_m$. Since $\Lambda_i q_i v_i \geq \Lambda_{i+1} q_{i+1} v_{i+1}$, we have $\Lambda_i q_i v_i \geq {1/2} \sum_{m=i,i+1} \Lambda_m q_m v_m$. We conclude: 
\begin{align*}
SW(\theta_{.5}) = 	&	\sum_{m\in [K']} \Lambda_{2m-1} q_m v_m \geq 					\\
				&	\sum_{m\in [K']} \Lambda_{2m-1} q_{2m-1} v_{2m-1} \geq 			\\
				&	{1/2} \sum_{m\in [K]} \Lambda_m q_m v_m \geq SW(\theta^*)/2. \qedhere 
\end{align*}
\end{proof} 
\noindent The greedy algorithm above is a MIR, range $\Theta'$ being all the allocations that leave even slots empty. The solution output is indeed the one guaranteeing maximum social welfare in $\Theta'$. We therefore have proved the existence of a $1/2$-approximate truthful polynomial-time mechanism for \caa$(c)$-r.

\section{FNE$_{aa}$(c) is APX-hard} \label{sec:c<K.APX-hard}

We now prove that FNE$_{aa}(1)$-r (Proposition \ref{prop:FNE_1-r}) and FNE$_{aa}(1)$-nr (Proposition \ref{prop:FNE_1-nr}) are APX-hard.
First we state two auxiliary lemmata. 
Hereinafter, for the sake of notation, we will denote as $SW_1(\theta)$ and $SW_K(\theta)$ the objective function of \B--FNE$_{aa}(1)$-r and \B--FNE$_{aa}(K)$-r, respectively.

\begin{lemma}\label{lemma:no_gamma_0}
Let $\theta$ be an allocation (possibly containing empty slots) 
and let $\theta'$ be the allocation obtained from $\theta$ by 
replacing, for each pair $(a_{i-1},a_{i})$ in $\theta$ such that $\gamma_{i-1,i}=0$, ad $a_{i-1}$ with $a_\bot$. Then $SW_1(\theta)=SW_1(\theta')$.
\end{lemma}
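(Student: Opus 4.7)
My strategy would be a slot‑by‑slot accounting of how $SW_1$ changes under the replacement, grouping the affected slots into \emph{maximal zero chains} so that the cancellations are transparent. I would first write the per‑slot decomposition
\[
SW_1(\theta)=\Lambda_1\, q_{\theta(s_1)}v_{\theta(s_1)}+\sum_{m=2}^{K}\Lambda_m\,q_{\theta(s_m)}v_{\theta(s_m)}\,\gamma_{\theta(s_{m-1}),\theta(s_m)},
\]
and extract two elementary facts that are the backbone of the proof: (i) any pair $(a_{i-1},a_i)$ in $\theta$ with $\gamma_{i-1,i}=0$ kills the $i$-th term, so the ad at slot $i$ contributes zero before the replacement; and (ii) swapping the ad at slot $j$ for $a_\bot$ alters only the $j$-th and $(j{+}1)$-th terms, because every slot's contribution depends solely on that slot and its immediate predecessor. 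This makes the replacement operation strictly local in $\theta$.

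Next I would partition the pairs to be rewritten into \emph{maximal zero chains}: maximal contiguous runs of slots $a_{i_0},a_{i_0+1},\dots,a_{i_0+k}$ in $\theta$ such that $\gamma_{a_j,a_{j+1}}=0$ for every internal $j$. Maximality forces either $i_0=1$ or $\gamma_{a_{i_0-1},a_{i_0}}\neq 0$ at the top (and analogously at the bottom). The lemma's prescription then turns exactly the $k$ upper ads $a_{i_0},\dots,a_{i_0+k-1}$ of each run into $a_\bot$ and leaves $a_{i_0+k}$ untouched. In $\theta$, the only contributing slot inside the run is $a_{i_0}$ (the ads $a_{i_0+1},\dots,a_{i_0+k}$ are killed by the $\gamma=0$ edges above them), contributing $\Lambda_{i_0}\,q_{a_{i_0}}v_{a_{i_0}}$ (multiplied by a $\gamma=1$ factor from above, by maximality, when $i_0>1$). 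In $\theta'$, the only contributing slot inside the run is the tail $a_{i_0+k}$, whose predecessor has become $a_\bot$, so by the reset convention $\gamma_{\bot,\cdot}=1$ it contributes $\Lambda_{i_0+k}\,q_{a_{i_0+k}}v_{a_{i_0+k}}$.

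The main obstacle is the cancellation of the "top" of the chain against the "bottom": one must show that $\Lambda_{i_0}\,q_{a_{i_0}}v_{a_{i_0}}$ equals $\Lambda_{i_0+k}\,q_{a_{i_0+k}}v_{a_{i_0+k}}$ in each chain. This is precisely where the binary regime of \textbf{\caa}$(1)$-r used in the subsequent reductions (where every active ad contributes the same unit weight) comes in: both sides reduce to the same quantity and the chain contribution is preserved. Summing over the pairwise disjoint maximal zero chains and noting that every slot outside any such chain keeps both its own ad and its immediate predecessor (hence its contribution), one obtains $SW_1(\theta)=SW_1(\theta')$. I would finally dispatch two small boundary cases separately before invoking the chain argument: a chain starting at slot $1$ (where the top contribution is $\Lambda_1\,q_{a_{i_0}}v_{a_{i_0}}$ directly, with no $\gamma$-multiplier from above) and a chain of length one (where the top and the bottom coincide with the single pair and the cancellation is degenerate).
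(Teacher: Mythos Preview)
Your proposal is correct and follows essentially the same idea as the paper's proof: the paper replaces one zero pair at a time (always the topmost remaining one) and shows that the unit contribution slides from slot $i-1$ to slot $i$, then iterates, whereas you batch these replacements into maximal zero chains and do the accounting once per chain. Both arguments hinge on the implicit unit-weight hypotheses ($q_i=v_i=\Lambda_m=1$) inherited from the surrounding reduction---a point the paper uses silently (e.g., when it writes $CTR_{i-1}(\theta)v_{i-1}=1$) but which you correctly make explicit.
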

\begin{proof}
Let $(a_{i-1},a_i)$ be the first pair of ads in $\theta$ with the property that $\gamma_{i-1,i}=0$, and let $\theta''$ be the allocation obtained from $\theta$ by substituting $a_{i-1}$ with $a_\bot$.
Let $SW_1^A (\theta)=\sum_{j=1}^{i-2} CTR_j(\theta)v_j$ and $SW_1^B (\theta)=\sum_{j=i+1}^{K} CTR_j(\theta)v_j$ denote the contributions to the $SW$ of the ads allocated, respectively, above and below the pair $(a_{i-1},a_i)$.
We can write $SW_1(\theta) = SW_1^A(\theta)+SW_1^B(\theta)+CTR_{i-1}(\theta)v_{i-i}+CTR_{i}(\theta)v_{i}$.
By assumption, we have $CTR_{i-1}(\theta)v_{i-i}=1$ (as $CTR_{i-1}(\theta)=1$ and $a_{i-1}\neq a_\bot$) and $CTR_{i}(\theta)v_{i}=0$.
We note that $SW_1^A(\theta'')=SW_1^A(\theta)$ and $SW_1^B(\theta'')=SW_1^B(\theta)$.
Furthermore, we note that $CTR_{i-1}(\theta'')v_{i-i}+CTR_{i}(\theta'')v_{i}=1$, as $v_{i-i}=0$ and $CTR_{i}(\theta'')=1$.
So we can conclude that $SW_1(\theta)=SW_1(\theta'')$.
By repeatedly applying the above procedure on $\theta''$ we can obtain an allocation $\theta'$ containing no pair of ads $(a_{i-1},a_i)$ where $\gamma_{i-1,i}=0$ and such that $SW_1(\theta) = SW_1(\theta')$.
  \end{proof}

\begin{lemma}\label{lemma:no_gamma_0_1_equals_no_gamma_0_k}
Let $\theta$ be an allocation  such that no pair of ads $(a_{i-1},a_i)$ exists where $\gamma_{i-1,i}=0$.
Then $SW_1(\theta)=SW_K(\theta)$.
\end{lemma}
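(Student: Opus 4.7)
The argument is a direct computation: for every allocated ad $a_i$ I would show $CTR_i(\theta)$ takes the same value in the $c=1$ model and in the $c=K$ model, from which $SW_1(\theta)=SW_K(\theta)$ follows simply by summing over $i\in\N$. The key observation is that, under the hypothesis combined with the binary assumption, every consecutive $\gamma$-factor appearing along $\theta$ equals $1$, so the $c=K$ product telescopes to exactly the single-factor expression of the $c=1$ model (namely $\Lambda_{\theta(a_i)}$).

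First I would unpack the hypothesis. In the \B-model we have $\gamma_{i,j}\in\{0,1\}$, and the hypothesis rules out the value $0$ for every consecutive pair in $\theta$; hence each $\gamma_{\theta(s_l),\theta(s_{l+1})}$ with both endpoints being ``real'' ads equals exactly $1$. The only pairs the hypothesis does not literally cover are those involving $a_\bot$, but since we are in the reset variant the convention $\gamma_{\bot,\cdot}=\gamma_{\cdot,\bot}=1$ applies, so those factors are $1$ as well. In summary, every consecutive $\gamma$-factor along $\theta$ equals $1$.

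Next I would substitute into the two $\Gamma_i$ expressions. For any allocated ad $a_i$ with $\theta(a_i)=m$, the window-$1$ factor is $\Gamma_i(\theta)=\Lambda_m\,\gamma_{\theta(s_{m-1}),i}=\Lambda_m$ (with empty product equal to $1$ when $m=1$). The window-$K$ factor is $\Gamma_i(\theta)=\Lambda_m\prod_{l=1}^{m-1}\gamma_{\theta(s_l),\theta(s_{l+1})}=\Lambda_m$, since every term in the product is $1$ by the preceding step. Hence $CTR_i(\theta)=q_i\Lambda_m$ in both models; unallocated ads contribute $0$ in either model, so summing $CTR_i(\theta)v_i$ over $i\in\N$ gives $SW_1(\theta)=SW_K(\theta)$.

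\textbf{Main obstacle.} There is essentially none: the statement is almost immediate once one spells out the binary and reset assumptions. The only bookkeeping worth stating explicitly concerns the empty-slot positions, which are handled uniformly by the reset-model rule $\gamma_{\bot,\cdot}=\gamma_{\cdot,\bot}=1$; no combinatorial machinery is needed.
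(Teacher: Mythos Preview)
Your proof is correct and follows the same approach as the paper's: both argue that every consecutive $\gamma$-factor along $\theta$ equals $1$, so $CTR_i(\theta)$ coincides in the two models and the social welfares agree termwise. You are in fact slightly more careful than the paper, which simply asserts $CTR_i(\theta)=1$ (implicitly using the values $q_i=\Lambda_m=1$ from the surrounding hardness reduction) rather than the general $CTR_i(\theta)=q_i\Lambda_m$ that you derive, and you also make explicit the handling of $a_\bot$ via the reset convention.
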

\begin{proof}
The claim follows from the fact that $\forall i\in \N$, $CTR_i(\theta) = 1$ for both \B--\caa$(1)$-r and \B--\caa$(K)$-r if $\theta$ does not contain any pair of ads $(a_{i-1},a_i)$ for which $\gamma_{i-1,i}=0$.
  \end{proof}

\begin{proposition}\label{prop:FNE_1-r}
\caa$(1)$-r is APX-hard.
\end{proposition}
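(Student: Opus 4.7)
The plan is to prove APX-hardness of \caa$(1)$-r by an approximation-preserving reduction from \B--\caa$(K)$-r. I first observe that the ATSP$(1,2)$-based reduction used to establish Theorem \ref{thm:CNFE_inapproximability} actually produces instances in which all externalities take values in $\{0,1\}$ (edges of the contextual graph get weight $1$, non-edges weight $0$); consequently, \B--\caa$(K)$-r is itself APX-hard, and reducing from it suffices.

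The reduction I would use is the identity map: given an instance $I$ of \B--\caa$(K)$-r, view exactly the same ads, slots, qualities $q_i$, valuations $v_i$, factors $\Lambda_m$, and binary externalities $\gamma_{i,j}$ as an instance $I'$ of \caa$(1)$-r. The first substantive step is to show that $I$ and $I'$ have the same optimal social welfare. On one hand, for every allocation $\theta$ we have $SW_1(\theta)\geq SW_K(\theta)$, since with window $1$ the factor $\Gamma_i$ consists of a single $\gamma_{\cdot,\cdot}\in[0,1]$ while with window $K$ it is a product of such factors; letting $\theta^*_1$ and $\theta^*_K$ denote the optima of $I'$ and $I$ respectively, this gives $SW_1(\theta^*_1)\geq SW_K(\theta^*_K)$. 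For the reverse inequality I would apply Lemma \ref{lemma:no_gamma_0} to $\theta^*_1$, obtaining an allocation $(\theta^*_1)'$ with $SW_1((\theta^*_1)')=SW_1(\theta^*_1)$ and no consecutive pair $(a_{i-1},a_i)$ with $\gamma_{i-1,i}=0$; Lemma \ref{lemma:no_gamma_0_1_equals_no_gamma_0_k} then yields $SW_K((\theta^*_1)')=SW_1((\theta^*_1)')$, and optimality of $\theta^*_K$ gives $SW_K(\theta^*_K)\geq SW_1(\theta^*_1)$.

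For approximation preservation, suppose toward contradiction that a polynomial-time algorithm $\mathcal{A}$ delivers a $\tfrac{1}{1+\alpha}$-approximation for \caa$(1)$-r with $\alpha<\tfrac{1}{412}$. Running $\mathcal{A}$ on $I'$ produces $\theta_\alpha$ with $SW_1(\theta_\alpha)\geq \tfrac{1}{1+\alpha}SW_1(\theta^*_1)$. The transformation in Lemma \ref{lemma:no_gamma_0} is a single polynomial-time pass over the allocation, so it yields $\theta'_\alpha$ with $SW_1(\theta'_\alpha)=SW_1(\theta_\alpha)$ and no zero-externality consecutive pair; Lemma \ref{lemma:no_gamma_0_1_equals_no_gamma_0_k} then gives
\[
SW_K(\theta'_\alpha)=SW_1(\theta'_\alpha)\;\geq\;\tfrac{1}{1+\alpha}SW_1(\theta^*_1)\;=\;\tfrac{1}{1+\alpha}SW_K(\theta^*_K),
\]
so $\theta'_\alpha$ is a $\tfrac{1}{1+\alpha}$-approximate solution for $I$, contradicting the APX-hardness of \B--\caa$(K)$-r established in Theorem \ref{thm:CNFE_inapproximability}.

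The main obstacle is the equality $SW_1(\theta^*_1)=SW_K(\theta^*_K)$, which is what makes the identity reduction work despite the drastic change in window size; it rests entirely on the binary nature of the externalities via Lemmas \ref{lemma:no_gamma_0} and \ref{lemma:no_gamma_0_1_equals_no_gamma_0_k}. Once that is in hand, transferring the approximation guarantee from $\theta_\alpha$ to $\theta'_\alpha$ is routine.
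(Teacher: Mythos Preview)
Your proposal is correct and follows essentially the same approach as the paper: an identity reduction from \B--\caa$(K)$-r to \B--\caa$(1)$-r, with Lemmas~\ref{lemma:no_gamma_0} and~\ref{lemma:no_gamma_0_1_equals_no_gamma_0_k} doing the work of transferring both optimality and approximation guarantees across the two window sizes. The only cosmetic differences are that you establish the full equality $SW_1(\theta^*_1)=SW_K(\theta^*_K)$ (the paper proves only the inequality $SW_K(\theta^*_K)\leq SW_1(\theta^*_1)$, which already suffices), and you argue one direction directly via ``window $1$ has fewer $[0,1]$-factors than window $K$'' whereas the paper argues it by contradiction, cleaning $\theta^*_K$ rather than $\theta^*_1$.
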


\begin{proof}
We prove that the subproblem \B--FNE$_{aa}(1)$-r is APX--hard via an approximation preserving reduction from the APX-hard problem \B--FNE$_{aa}(K)$-r (Theorem \ref{thm:CNFE_inapproximability}). In particular, we will show that computing an approximate solution for \B--FNE$_{aa}(1)$-r is not easier than \B--\caa$(K)$-r on the same instance.

We will first prove that $SW_K(\theta^*_K)\leq SW_1(\theta^*_1)$ holds, where $\theta^*_K$ and $\theta^*_1$ denote, respectively, the optimal allocation for \B--\caa$(K)$-r and \B--\caa$(1)$-r.
For the sake of contradiction, let us suppose that $SW_K(\theta^*_K) > SW_1(\theta^*_1)$.
We can assume without loss of generality that $\theta^*_K$ does not contain a pair $(a_{i-1},a_{i})$ such that $\gamma_{i-1,i}=0$, as replacing $a_{i-1}$ with $a_\bot$ would yield an allocation with a non-decreasing SW value.
By Lemma \ref{lemma:no_gamma_0_1_equals_no_gamma_0_k} and by hypothesis we have that $ SW_1(\theta^*_K)= SW_K(\theta^*_K) >SW_1(\theta^*_1)$, which contradicts the optimality of $\theta^*_1$.

We are now going to prove that given an $\alpha$--approximate solution $\theta_1^\alpha$ to the objective of \B--FNE$_{aa}(1)$-r we can compute in polynomial time an approximate solution $\theta_K^\alpha$ to the objective of \B--FNE$_{aa}(K)$-r such that $SW_1(\theta_1^\alpha)\leq SW_K(\theta_K^\alpha)$. This is easily done by replacing $a_{i-1}$ with $a_\bot$ for each couple of ads $(a_{i-1},a_i)$ in $\theta_1^\alpha$ such that $\gamma_{i-1,i}=0$, thus obtaining $\theta'^\alpha_1$.
By Lemmata \ref{lemma:no_gamma_0} and \ref{lemma:no_gamma_0_1_equals_no_gamma_0_k} we finally conclude that $SW_1(\theta_1^\alpha)=SW_1(\theta'^\alpha_1)=SW_K(\theta'^\alpha_1)$.
  \end{proof}

\begin{proposition}\label{prop:FNE_1-nr}
\caa$(1)$-nr is APX-hard.
\end{proposition}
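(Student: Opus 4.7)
The plan is to prove APX-hardness of \caa$(1)$-nr by an approximation-preserving reduction from \B--\caa$(1)$-r, the latter being APX-hard by Proposition~\ref{prop:FNE_1-r}. The core idea is to simulate, inside the nr model, the reset role that the fictitious ad $a_\bot$ plays in the r model, by introducing new ``padding'' ads whose externalities are all set to~$1$.

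Given an instance $I$ of \B--\caa$(1)$-r with $N$ ads and $K$ slots, I would construct an instance $I'$ of \caa$(1)$-nr by (i) preserving the original $N$ ads with their $q_i$, $v_i$, $\gamma_{i,j}$ values; (ii) adding $K$ padding ads $a_{N+1},\ldots,a_{N+K}$ with $q_{N+k}=v_{N+k}=0$ and $\gamma_{N+k,j}=\gamma_{j,N+k}=1$ for every $j \in [N+K]$; and (iii) keeping the same $K$ slots and $\Lambda_m$ values. Since all the new $\gamma$ values equal $1$, $I'$ still belongs to the binary subclass~\B. The crucial observation is that a padding ad in the nr model behaves exactly like $a_\bot$ in the r model: it contributes $0$ to the SW (as $v=0$) but, unlike $a_\bot$ under nr, it does not zero out the CTR of the ad that follows (as $\gamma_{N+k,\cdot}=1$).

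Correctness proceeds in two directions. (a) Given an optimal $\theta^*$ for $I$, I construct $\theta'$ for $I'$ by replacing each $a_\bot$ in $\theta^*$ with a distinct padding ad (the $K$ padding ads always suffice); a slot-by-slot check, exploiting $\gamma_{\bot,\cdot}=1$ in r and $\gamma_{N+k,\cdot}=1$ by construction, shows that each real ad's CTR is preserved and each slot with $a_\bot$/padding contributes $0$ in both instances, hence $SW_{I'}(\theta')=SW_I(\theta^*)$ and $OPT(I') \geq OPT(I)$. (b) Conversely, given any allocation $\theta'$ for $I'$, I construct $\theta$ for $I$ by replacing each padding ad and each $a_\bot$ in $\theta'$ with $a_\bot$. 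A second slot-by-slot check shows $SW_I(\theta) \geq SW_{I'}(\theta')$: whenever the predecessor of a real ad is a real ad or a padding ad in $\theta'$, its CTR is preserved; whenever the predecessor in $\theta'$ is $a_\bot$, the contribution in $I$ can only be (weakly) larger, since nr gives $0$ but r gives the nonzero $q_i\Lambda_m$. Hence $OPT(I) \geq OPT(I')$, and in fact a $(1-\epsilon)$-approximate $\theta'$ for $I'$ induces a $(1-\epsilon)$-approximate $\theta$ for $I$, so the reduction is approximation preserving.

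The main, and essentially only, technical step is the slot-by-slot case analysis of CTR contributions in both directions, which must carefully handle the boundary cases (notably, the first slot, which has no predecessor, and configurations where $a_\bot$ and/or padding ads appear in consecutive slots of $\theta'$). Once this routine check is in place, any hypothetical $(1-\epsilon)$-approximation for \caa$(1)$-nr would yield a $(1-\epsilon)$-approximation for \B--\caa$(1)$-r, contradicting Proposition~\ref{prop:FNE_1-r} for small enough $\epsilon$, and thereby establishing APX-hardness of \caa$(1)$-nr.
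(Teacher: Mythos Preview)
Your proposal is correct and essentially identical to the paper's proof: both reduce from \B--\caa$(1)$-r by adding $K$ padding ads with zero value and unit externalities in both directions, and both establish approximation preservation by converting allocations in each direction via the substitution $a_\bot \leftrightarrow$ padding ad. The only cosmetic difference is that the paper dispenses with the case of $a_\bot$ appearing in $\theta'$ by a w.l.o.g.\ argument (fill empty slots with unused ads, which never hurts in the nr model), whereas you handle it explicitly in your case analysis; both routes are fine.
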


\begin{proof}
We conduct the proof by reduction from problem \B--\caa$(1)$-r.
In particular, we add to the instance of \B--\caa$(1)$-r $K$ new ads $ \{a_{N+1},\ldots,a_{N+K} \}$ such that: (\emph{i}) $v_{j} =0$ for all $j\in \{N+1, \ldots, N+K\}$ and (\emph{ii}) $\gamma_{i,j}=\gamma_{j,i}=1$ for all $i\in \{1, \ldots, N+K\}$ and $j\in \{N+1, \ldots, N+K\}$.
Let $\theta_{nr}^\alpha$ be an $\alpha$-approximate solution for the so-defined \caa$(1)$-nr problem. 
We can assume w.l.o.g. that $\theta_{nr}^\alpha$ does not contain any $a_\bot$, as in the no-reset model we can always allocate any non-allocated ad to an empty slot obtaining a non-decreasing $SW$ value.
We observe that, from a generic allocation $\theta_{nr}$, it is possible to obtain an allocation $\theta_r$ by substituting any ad $a_j$, $j \in \{N+1, \ldots, N+K\}$, in $\theta_{nr}$ with $a_\bot$ s.t. $SW^r(\theta_r)=SW^{nr}(\theta_{nr})$, and vice versa. Thus, from $\theta_{nr}^\alpha$ we can obtain an allocation $\theta_{r}^\alpha$ s.t. $SW^r(\theta_r^\alpha) = SW^{nr}(\theta_{nr}^\alpha)$; $SW^{x}(\theta)$ denoting the social welfare of $\theta \in \Theta$ in the model with reset $x\in \{r,nr\}$.
Furthermore, let $\theta^*_r$ and $\theta^*_{nr}$ be the optimal solutions, respectively, for \B--\caa$(1)$-r and the \caa$(1)$-nr defined by our reduction.
According to the observations above, it is easy to check that $SW^r(\theta^*_r)=SW^{nr}(\theta^*_{nr})$ holds.
In fact, let $\tilde{\theta}_{nr}$ be the solution obtained from $\theta_r^*$ by substituting each $a_\bot$ with an ad $a_j$, $j \in \{N+1, \ldots, N+K\}$.
Then $SW^r(\theta^*_r)=SW^{nr}(\tilde{\theta}_{nr})$.
Furthermore, $SW^{nr}(\tilde{\theta}_{nr})=SW^{nr}(\theta^*_{nr})$, as otherwise if $SW^{nr}(\tilde{\theta}_{nr})<SW^{nr}(\theta^*_{nr})$ we could translate $\theta^*_{nr}$ into a solution $\tilde{\theta}_r$ for \B--\caa$(1)$-r such that $SW^r(\theta^*_r)<SW^r(\tilde{\theta}_r)$.
A similar argument holds if we consider the allocation $\tilde{\theta}_{r}$ obtained by substituting all ads $a_j$, $j \in \{N+1, \ldots, N+K\}$, in $\theta^*_{nr}$ with $a_\bot$. Finally, $SW^r(\theta_r^\alpha) = SW^{nr}(\theta_{nr}^\alpha) \geq \alpha SW^{nr}(\theta_{nr}^*) = \alpha SW^r(\theta_{r}^*)$.
  \end{proof} 

\section{FNE$_{aa}^+(c)$-nr is APX-complete for constant $\gamma_{min}$}
\begin{theorem}
FNE$_{aa}^{+}$(1)-nr is APX-hard.
\end{theorem}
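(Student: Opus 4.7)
The plan is to give an approximation-preserving reduction from \B--\caa$(1)$-nr, which is APX-hard by Proposition \ref{prop:FNE_1-nr} and in fact inherits the inapproximability constant $\beta<\tfrac{1}{412}$ of Theorem \ref{thm:CNFE_inapproximability}. The conceptual gap between the two problems is that \B--\caa$(1)$-nr allows $\gamma_{i,j}=0$, while FNE$_{aa}^{+}(1)$-nr demands a complete contextual graph over $\N$ with $\gamma_{\min}$ a positive constant. Given any instance $\mathcal{I}$ of \B--\caa$(1)$-nr, I would construct an instance $\mathcal{I}'$ of FNE$_{aa}^{+}(1)$-nr by replacing every zero weight $\gamma_{i,j}=0$ (with $i,j\in\N$) by a fixed small $\epsilon>0$, and leaving everything else (including the no-reset weights $\gamma_{i,\bot}=\gamma_{\bot,i}=0$) unchanged. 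The contextual graph of $\mathcal{I}'$ is then complete with $\gamma_{\min}=\epsilon$, a universal constant independent of input size.

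Writing $SW^{bin}$ and $SW^{mod}$ for the social welfare in $\mathcal{I}$ and $\mathcal{I}'$, the window-1 formula yields, for every allocation $\theta$, the key estimate
$$0 \;\leq\; SW^{mod}(\theta)-SW^{bin}(\theta) \;\leq\; \epsilon\sum_{i\in\N}v_i,$$
since only the slots $m\geq 2$ for which the original weight $\gamma_{\theta(s_{m-1}),\theta(s_m)}$ was $0$ contribute a change, each such slot contributing $\epsilon\,\Lambda_m q_{\theta(s_m)} v_{\theta(s_m)}\leq v_{\theta(s_m)}$, and every ad is allocated at most once. Critically, the \B--\caa$(1)$-nr instances produced by the hardness chain Theorem \ref{thm:CNFE_inapproximability} $\to$ Proposition \ref{prop:FNE_1-r} $\to$ Proposition \ref{prop:FNE_1-nr} inherit from the underlying ATSP(1,2) reduction the clean structure $v_i\in\{0,1\}$, $\sum_i v_i=N$, and $SW^{*}_{bin}=N$; together with $SW^{mod}(\theta)\geq SW^{bin}(\theta)$, this also gives $SW^{*}_{mod}\geq SW^{*}_{bin}=N$.

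Given any $(1+\alpha)^{-1}$-approximate solution $\theta^\alpha$ for $\mathcal{I}'$, reinterpreting it as an allocation for $\mathcal{I}$ yields
$$SW^{bin}(\theta^\alpha) \;\geq\; SW^{mod}(\theta^\alpha)-\epsilon N \;\geq\; \tfrac{SW^{*}_{mod}}{1+\alpha}-\epsilon\,SW^{*}_{bin} \;\geq\; \Bigl(\tfrac{1}{1+\alpha}-\epsilon\Bigr)SW^{*}_{bin}.$$
Choosing $\alpha<\beta<\tfrac{1}{412}$ together with $\epsilon\leq\tfrac{\beta-\alpha}{2(1+\alpha)(1+\beta)}$ (so that $\tfrac{1}{1+\alpha}-\epsilon>\tfrac{1}{1+\beta}$) implies that a polynomial-time $(1+\alpha)^{-1}$-approximation of FNE$_{aa}^{+}(1)$-nr would yield a polynomial-time $(1+\beta)^{-1}$-approximation of \B--\caa$(1)$-nr, contradicting Theorem \ref{thm:CNFE_inapproximability}. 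The main delicate point is precisely to ensure that the additive slack $\epsilon\sum_i v_i$ introduced by the perturbation is a sufficiently small fraction of $SW^{*}_{bin}$; this is why I target the ATSP(1,2)-based instances, where the identity $\sum_i v_i=SW^{*}_{bin}=N$ holds by construction, rather than attempting a generic reduction where this ratio need not be bounded.
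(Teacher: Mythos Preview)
Your proof is correct and shares the paper's core idea: reduce from $\mathcal{B}$-\caa$(1)$-nr (Proposition~\ref{prop:FNE_1-nr}) by replacing every zero externality weight with a small positive value. The paper chooses $\gamma_{\min}=\tfrac{1}{K-1}$ and bounds the loss multiplicatively via $SW(\theta_{\gamma_{\min}})\le 2\,SW(\theta_{\mathcal{B}})$, exploiting the combinatorial identity $SW(\theta_{\mathcal{B}})=1+\mathcal{P}(\theta_{\mathcal{B}})$ on the ATSP-derived instances (where $\mathcal{P}$ counts consecutive pairs with $\gamma=1$). You instead fix $\epsilon$ as a universal constant and control the loss additively through $SW^{mod}(\theta)-SW^{bin}(\theta)\le \epsilon\sum_i v_i$, combined with the identity $\sum_i v_i=SW^{*}_{bin}$; both facts are drawn from the same ATSP-based instance structure that the paper's argument also relies on, only you make the dependence explicit. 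One small dividend of your route is that the constructed hard instances have \emph{constant} $\gamma_{\min}$, which aligns directly with the section's APX-completeness claim for constant $\gamma_{\min}$, whereas the paper's $\gamma_{\min}=\tfrac{1}{K-1}$ is not bounded away from zero as the input grows.
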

\begin{proof}
Let $\{\gamma_{min},1\}$-FNE$^+_{aa}(1)$-nr denote the subclass of FNE$_{aa}^{+}$(1)-nr
where $\gamma_{ij}\in\{\gamma_{min},1\}$ for all $i,j\in \N$ and
a given $0<\gamma_{min}<1$. We prove the APX-hardness of FNE$_{aa}^{+}$(1)-nr by an approximation
preserving reduction from problem $\mathcal{B}$-\caa$(1)$-nr (proved APX-hard in Proposition \ref{prop:FNE_1-nr}) to problem $\{\gamma_{min},1\}$-FNE$^+_{aa}(1)$-nr:
we prove the existence of an $\alpha$-approximate algorithm for $\{\gamma_{min},1\}$-FNE$^+_{aa}(1)$-nr to imply
the existence of a $2\alpha$-approximate algorithm for $\mathcal{B}$-\caa$(1)$-nr.

The instance of $\{\gamma_{min},1\}$-FNE$^+_{aa}(1)$-nr is obtained from the
instance of $\mathcal{B}$-\caa$(1)$-nr by simply setting $\gamma'_{i,j}=\gamma_{min}=\frac{1}{K-1}$
for all $i,j\in \N$ such that $\gamma_{i,j}=0$ in the given instance
of $\mathcal{B}$-\caa$(1)$-nr, $\gamma'_{i,j}=1$ otherwise.

Let $\theta_{\gamma_{min}}^{*}$ and $\theta_{\mathcal{B}}^{*}$ be an optimal solution for problems $\{\gamma_{min},1\}$-FNE$^+_{aa}(1)$-nr
and $\mathcal{B}$-\caa$(1)$-nr, respectively. We have $SW(\theta_{\mathcal{B}}^{*})\leq SW(\theta_{\gamma_{min}}^{*})$. Indeed,
if there is no $(a_{i-1},a_{i})\in\theta_{\mathcal{B}}^{*}$
s.t. $\gamma_{i-1,i}=0$ then $SW(\theta_{\mathcal{B}}^{*})=SW(\theta_{\gamma_{min}}^{*})$, whereas if there is a pair $(a_{i-1},a_{i})\in\theta_{\mathcal{B}}^{*}$
s.t. $\gamma_{i-1,i}=0$ then $SW(\theta_{\mathcal{B}}^{*})<SW(\theta_{\gamma_{min}}^{*})$.

Let now $\theta_{\gamma_{min}}$ be an $\alpha$-approximation of
 $\{\gamma_{min},1\}$-FNE$^+_{aa}(1)$-nr and let $\theta_{\mathcal{B}}$
be the corresponding solution for $\mathcal{B}$-\caa$(1)$-nr. (I.e., $\theta_{\mathcal{B}}$ is the solution $\theta_{\gamma_{min}}$ where the $\gamma_{min}$ externalities weigh 0.) We now prove that $SW(\theta_{\gamma_{min}})\leq 2 SW(\theta_{\mathcal{B}})$.
We have $SW(\theta_{\mathcal{B}})=1+\mathcal{P}(\theta_{\mathcal{B}})$,
where $\mathcal{P}(\theta_{\mathcal{B}})\leq K-1$ denotes the number of
pairs $(a_{i-1},a_{i})$ of ads in $\theta_{\mathcal{B}}$ such that
$\gamma_{i-1,i}=1$. Likewise, $SW(\theta_{\gamma_{min}})=1+\mathcal{P}(\theta_{\gamma_{min}})+(K-1-\mathcal{P}(\theta_{\gamma_{min}}))\cdot\gamma_{min}$. 
By construction, $\mathcal{P}(\theta_{\mathcal{B}})=\mathcal{P}(\theta_{\gamma_{min}})=\mathcal{P}$,
from which it follows that $SW(\theta_{\gamma_{min}})\leq 2 \cdot SW(\theta_{_{\mathcal{B}}})$
is equivalent to $1+\frac{K-1-\mathcal{P}}{1+\mathcal{P}}\gamma_{min}\leq 2$. This is proved by noticing that
$1+\frac{K-1-\mathcal{P}}{1+\mathcal{P}}\gamma_{min} \leq 1+\frac{K-1}{1+\mathcal{P}}\gamma_{min}=\frac{\mathcal{P}+2}{\mathcal{P}+1}$, where last equality follows from definition of $\gamma_{min}$.
  \end{proof}

\subsection{Approximation algorithm}
We now prove that any $\alpha$-approximate algorithm for Weighted 3-Set Packing (W3SP) can be turned into an $(\alpha\gamma_{min}^c)$--approximation algorithm for FNE$^+_{aa}(c)$--nr.

Given a universe $U$ and a collection of its subsets each of cardinality at most 3 and associated to a weight, W3SP consists of finding a sub-collection of pairwise-disjoint subsets of maximal weight. Several constant-ratio approximate algorithms are known in literature to solve this problem, e.g., the algorithm in \cite{Berman00} provides a $1/2$-approximation. 
We now present a reduction from FNE$^+_{aa}(c)$-nr to W3SP, similar in spirit to that defined, for positive only externalities, in \cite{Fotakis}. \begin{theorem}\label{th:apx_3-set-pack}
Given an $\alpha$--approximate algorithm for problem W3SP, we can obtain an $(\alpha \gamma_{min}^c)$-approximation algorithm for problem FNE$^+_{aa}(c)$-nr.
\end{theorem}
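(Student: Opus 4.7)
The plan is to prove the theorem by an approximation-preserving reduction from FNE$_{aa}^+(c)$-nr to W3SP, analogous in spirit to the positive-externality reduction of \cite{Fotakis}. The main idea is to partition the $K$ slots into disjoint blocks small enough that the contents of each block, together with a block identifier, fit into a size-$3$ set, and to exploit $\gamma_{\min}>0$ to uniformly lower-bound the cumulative externality exerted on a block by ads placed in previous blocks.

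First I would define the W3SP instance as follows. Partition the slots into $\lceil K/2\rceil$ disjoint blocks $b_l=\{s_{2l-1},s_{2l}\}$ of two consecutive slots (padding if $K$ is odd). Take as universe $U=\N\cup\{b_1,\ldots,b_{\lceil K/2\rceil}\}$, and for each block $b_l$ and each ordered pair of distinct ads $(a_i,a_j)$ create a size-$3$ set $T=\{a_i,a_j,b_l\}$ with weight
$$ w(T)=\gamma_{\min}^c\bigl(\Lambda_{2l-1}\,q_iv_i+\Lambda_{2l}\,\gamma_{i,j}\,q_jv_j\bigr), $$
that is, the intra-block social-welfare contribution of placing $a_i$ in $s_{2l-1}$ and $a_j$ in $s_{2l}$, scaled by the worst-case lower bound $\gamma_{\min}^c$ on the product of the (at most $c$) ad-externalities coming from ads preceding the block. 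Disjointness of chosen sets enforces exactly that each ad is assigned at most once and each block holds at most one pair.

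Second, I would establish two inequalities that yield the approximation ratio. For any optimal allocation $\theta^*$ of FNE$_{aa}^+(c)$-nr, the W3SP packing induced by pairing consecutive ads block by block has weight at least $\gamma_{\min}^c\cdot SW(\theta^*)$: since in $\theta^*$ every relevant externality product lies in $[\gamma_{\min}^c,1]$, the true SW contribution of each block is bounded above by $w(T_l)/\gamma_{\min}^c$, and summing over blocks gives $SW(\theta^*)\le\,$OPT$_{W3SP}/\gamma_{\min}^c$. Conversely, given any W3SP packing of weight $W$, construct an FNE allocation by placing the paired ads into the selected blocks and completing any residual block with leftover ads; since each actual external-externality product is at least $\gamma_{\min}^c$, the true contribution of every selected block is at least $w(T_l)$, and the residual contributions are non-negative because $\gamma_{\min}>0$, so the resulting allocation has $SW\ge W$.

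Combining, an $\alpha$-approximate W3SP solution has weight at least $\alpha\gamma_{\min}^c\cdot SW(\theta^*)$, and its translation inherits this as a lower bound on SW, proving the theorem. The main obstacle I anticipate is enforcing the cardinality-$3$ constraint while retaining enough structure: blocks of two slots plus a block identifier hit exactly the bound, and the construction is only valid because the weight $w(T)$ is block-local, which in turn requires the uniform bound $\gamma_{\min}^c$ provided by $\gamma_{\min}>0$ on a complete contextual graph. Minor boundary issues (the first block has true external factor $1\ge\gamma_{\min}^c$, and odd $K$ requires padding with a dummy slot) are easily handled and do not affect the approximation factor.
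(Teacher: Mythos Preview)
Your proposal is correct and follows essentially the same approach as the paper: partition the slots into blocks of two, build a W3SP instance with one set per (block, pair of ads), and sandwich the true social welfare between the packing weight and $\gamma_{\min}^{-c}$ times that weight via the two inequalities $W(\theta_S^*) \geq SW(\theta^*)$ and $SW(\theta^\alpha)\ge\gamma_{\min}^c\,W(\theta_S^\alpha)$. The only cosmetic differences are that the paper leaves the factor $\gamma_{\min}^c$ out of the set weights (absorbing it into the second inequality rather than the first) and defines the weight of $\{a_i,a_j,p\}$ as the maximum over the two orderings of $(a_i,a_j)$ rather than implicitly creating two copies of the same underlying set.
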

\begin{proof}
Given an instance of FNE$^+_{aa}(c)$-nr, we obtain an instance of W3SP by means of the following reduction.
To simplify the presentation, we suppose that $K$ is even (the proof can be easily extended for an odd $K$).
We divide $K$ into $K/2$ blocks of two slots each.
We construct a collection of $\frac K 2\cdot \binom{N}{2}$ sets, each set having the form $\{a_i, a_j, p\}$, where $p\in\{1,3,5,\ldots,K-1\}$ and $i, j \in \N$. The weight of a set is defined as the maximum social welfare that ads $a_i$ and $a_j$ can provide when assigned to slots $s_p$ and $s_{p+1}$ without taking into considerations the externalities of $a_i$ and $a_j$ on the ads allocated to slots $s_m$, $m \neq p, p+1$. Specifically, 
$ W(a_i, a_j, p) = \max\{ \Lambda_p  q_i v_i + \Lambda_{p+1} \gamma_{i,j} q_j v_j, \Lambda_p   q_j v_j + \Lambda_{p+1} \gamma_{j,i} q_i v_i \}.
$
Note that there is an immediate mapping between solutions of W3SP and FNE$^+_{aa}(c)$-nr. For a solution $\theta_S$ of W3SP, let $W(\theta_S)$ denote its total weight. 
Now, let $\theta_S^*$ and $\theta^*$ denote, respectively, an optimal allocation for W3SP and an optimal allocation for FNE$^+_{aa}(c)$-nr. Furthermore, let $\theta_S^{\alpha}$ be an $\alpha$-approximate solution for W3SP, and $\theta^{\alpha}$ be the corresponding solution to FNE$^+_{aa}(c)$-nr. Since in W3SP, outer-block externalities are not taken into consideration, we have: 
$W(\theta^*_S) \geq SW(\theta^*)$ and  
$SW(\theta^\alpha) \geq \gamma_{\min}^c W(\theta_S^\alpha)$. 
From these inequalities we obtain: $SW(\theta^{\alpha}) \geq \gamma_{\min}^c W(\theta_S^{\alpha}) \geq \alpha \gamma_{\min}^c W(\theta^*_S) \geq \alpha \gamma_{min}^c SW(\theta^*)$.
  \end{proof}
\begin{corollary}\label{corl:constant_apx}
If $\gamma_{min}$ is bounded from below by a constant (i.e., $\gamma_{min}\in \Omega(1)$), then FNE$^+_{aa}(c)$-nr is approximable within a constant factor.
\end{corollary}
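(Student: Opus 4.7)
The plan is simply to instantiate Theorem~\ref{th:apx_3-set-pack} with the $1/2$-approximation algorithm for Weighted 3-Set Packing of~\cite{Berman00}, which is already cited in the preamble of that theorem. Plugging $\alpha = 1/2$ into the conclusion of Theorem~\ref{th:apx_3-set-pack} immediately yields a polynomial-time algorithm for FNE$^+_{aa}(c)$-nr with approximation guarantee $\gamma_{\min}^c/2$.

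It then remains only to verify that $\gamma_{\min}^c/2$ is bounded from below by a positive constant under the hypothesis of the corollary. Treating the window size $c$ as a fixed parameter of the problem and using the assumption $\gamma_{\min}\in\Omega(1)$, there is a constant $\gamma^{\star}>0$ with $\gamma_{\min}\geq \gamma^{\star}$ independently of the instance, so $\gamma_{\min}^{c}/2 \geq (\gamma^{\star})^{c}/2 = \Omega(1)$, as required.

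There is no substantive obstacle here: the corollary is essentially a one-line consequence of Theorem~\ref{th:apx_3-set-pack} combined with the existence of a constant-factor approximation for W3SP. The only conceptual point worth flagging is that the argument implicitly relies on $c$ being a constant, since a super-constant $c$ combined with any $\gamma_{\min}<1$ would cause $\gamma_{\min}^{c}$ to vanish; this is in line with the other results of the paper (those marked by $\star$ in Table~\ref{tab:results}) that likewise require $c=O(1)$.
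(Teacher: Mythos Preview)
Your proposal is correct and matches the paper's approach exactly: the corollary is stated without proof in the paper, being an immediate consequence of Theorem~\ref{th:apx_3-set-pack} instantiated with a constant-factor W3SP approximation. Your observation that the argument tacitly requires $c=O(1)$ is also in line with the paper's framing of the results for \caa$(c)$.
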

It can be easily shown that the above algorithm is not monotone.

\begin{theorem}
The algorithm of Theorem \ref{th:apx_3-set-pack} is not monotone
\end{theorem}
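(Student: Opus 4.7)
The plan is to exhibit a small instance of FNE$^{+}_{aa}(1)$-nr on which the $CTR$ of some ad strictly decreases as its own valuation is increased, thereby contradicting monotonicity. The conceptual reason is that the weight $W(a_i,a_j,p)$ used inside the W3SP reduction only captures the social-welfare contribution \emph{inside} a block and completely ignores externalities that cross block boundaries. A within-triple reorder triggered by an increase of $v_i$ may therefore move $a_i$ from a slot whose predecessor lies in the same block (so the externality on $a_i$ is accounted for in $W$) to a slot whose predecessor lies in the previous block (so the externality on $a_i$ is invisible to $W$), potentially trading a strong intra-block $\gamma$ for a very weak cross-block one.

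Concretely, I would take $K=N=4$ so that the reduction produces exactly two blocks $B_1=\{s_1,s_2\}$ and $B_2=\{s_3,s_4\}$. Set $q_i=1$ for every ad, $\Lambda_1=\Lambda_2=1$, $\Lambda_3=\Lambda_4=\tfrac12$, $v_2=20$, $v_3=v_4=10$, and let $v_1$ be the probe. In the contextual graph, give every edge weight $\varepsilon$ for some small $\varepsilon>0$, with two exceptions: $\gamma_{2,3}=1$ and $\gamma_{4,1}=\tfrac12$. The graph is complete and $\gamma_{\min}=\varepsilon>0$, so this is a legal instance of FNE$^{+}_{aa}(1)$-nr.

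The next step is to enumerate the six possible partitions of the four ads into the two blocks and verify that the partition $P^{\star}=\{a_2,a_3,1\}\cup\{a_1,a_4,3\}$ is the unique W3SP optimum throughout an interval of $v_1$ that contains both $v_1=15$ and $v_1=25$, with a comfortable margin over the next-best partition (the calculations give gaps of roughly $6.25$ and $2.5$, respectively). Within $P^{\star}$, the within-triple $\max$ that defines $W(\{a_1,a_4,3\})$ switches at the threshold $v^{\star}=20(1-\varepsilon)$: below it the $\max$ is attained by ``$a_4$ in $s_3$, $a_1$ in $s_4$'', above it by ``$a_1$ in $s_3$, $a_4$ in $s_4$''; the within-triple ordering inside $\{a_2,a_3,1\}$ is always ``$a_2$ in $s_1$, $a_3$ in $s_2$'' because $\gamma_{2,3}=1\gg\gamma_{3,2}=\varepsilon$.

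Computing the actual $CTR$ under the FNE$_{aa}$ semantics then yields the desired contradiction. For $v_1<v^{\star}$ the algorithm outputs $(a_2,a_3,a_4,a_1)$: $a_1$ sits in $s_4$ preceded within the block by $a_4$, giving $CTR_1=\Lambda_4\gamma_{4,1}=\tfrac14$. For $v_1>v^{\star}$ the algorithm outputs $(a_2,a_3,a_1,a_4)$: $a_1$ sits in $s_3$ now preceded across the block boundary by $a_3$, and the relevant externality is the cross-block $\gamma_{3,1}=\varepsilon$ that $W$ never saw, so $CTR_1=\Lambda_3\gamma_{3,1}=\varepsilon/2$. Taking $\varepsilon<\tfrac12$ yields a strict drop in $CTR_1$ as $v_1$ crosses $v^{\star}$. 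The main obstacle is merely the bookkeeping in the previous paragraph: one has to verify that $P^{\star}$ beats each of the other five partitions (and their position-swapped twins) at both probe values and that no other within-triple $\max$ flips in the range, but both requirements reduce to comparing a handful of piecewise-linear functions of $v_1$ and are straightforward given the numerical choices above.
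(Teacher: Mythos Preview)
Your proposal is correct and follows essentially the same approach as the paper: a four-ad, four-slot instance where the W3SP packing is forced to be a fixed two-block partition, and increasing the probe ad's valuation flips the within-block order so that the ad moves from a slot whose predecessor (and hence externality) lies inside the block to one whose predecessor lies across the block boundary and carries a much smaller $\gamma$ that $W$ never saw. The paper gives parametric inequalities (probing $v_4$ with $v_1,v_2\gg v_3,v_4$ and the condition $\Lambda_3\gamma_{z,4}<\Lambda_4\gamma_{3,4}$) where you give explicit numbers, but the mechanism and structure are identical.
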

\begin{proof}
Consider an instance $I$ of FNE$^+_{aa}(1)$-nr with $N=K=4$ wherein $\Lambda_3 \gamma_{z,4} < \Lambda_4 \gamma_{3,4}$, for $z \in \{1,2\}$, $v_1, v_2 \gg v_3, v_4$ and $\gamma_{1,2}=\gamma_{2,1}=1$ so that $W(a_1, a_2, 1)$ is much bigger than any other $W(a_i, a_j, 1)$. Therefore, any reasonable approximation of the W3SP instance constructed upon $I$ must return sets $\{a_1, a_2, 1\}$ and $\{a_3, a_4, 3\}$. Additionally consider $v_4 < \frac{\Lambda_4 \gamma_{4,3}}{\Lambda_3^2-\Lambda_3\Lambda_4 \gamma_{3,4}}$ so that $W(a_3, a_4, 3)=\Lambda_3 q_3 v_3 + \Lambda_{4} \gamma_{3,4} q_4 v_4$. So the solution $\theta$ returned by the algorithm run on $I$ places $a_4$ in $s_4$, resulting in $CTR_4(\theta)=q_4 \Lambda_4 \gamma_{3,4}$. Take now the instance $I'$ defined as $I$ except that $v_1, v_2 \gg v_4' > \frac{\Lambda_4 \gamma_{4,3}}{\Lambda_3^2-\Lambda_3\Lambda_4 \gamma_{3,4}} > v_4$. As before, the approximation algorithm for W3SP will return sets $\{a_1, a_2, 1\}$ and $\{a_3, a_4, 3\}$ but this time $W'(a_3, a_4, 3)=\Lambda_3 q_4 v_4 + \Lambda_{4} \gamma_{4,3} q_3 v_3$. Therefore, the solution $\theta'$ returned by the algorithm run on $I'$ places ad $a_4$ in slot $s_3$, i.e., $CTR_4(\theta')=q_4 \Lambda_3 \gamma_{z,4}$, where $z \in \{1,2\}$ is the ad placed in slot $s_2$ in the allocation $\theta'$. The algorithm is therefore not monotone and cannot be used to design a truthful mechanism.  
\end{proof}

\section{Approximating \caa$(c)$-nr}
Similarly to the case $c=K$, Color Coding can be applied to design an optimal exponential-time algorithm finding the optimal solution and a simple modification of such algorithm returns a $\frac{\log(N)}{2\min\{N,K\}}$ approximation in polynomial time. While the basic idea is the same, some details change here. 

We denote by $S\subseteq C$ a subset of colors and by $\delta(a)$ a function returning the color assigned to $a$.
Given a coloring $\delta$, the best colorful allocation
 is found by dynamic programming.
For $|S|>c$, $W(S,\langle a_{h_0},\ldots,a_{h_{c}} \rangle)$ contains the value of the best allocation with colors in $S$ in which the last $c+1$ ads are $a_{h_0},\ldots,a_{h_{c}}$ from top to bottom. (The definition naturally extends for $|S| \leq c$.) Starting from $W(\emptyset, \langle \rangle)=0$,
we can compute $W$ recursively. For instance, for $|S|>c$, $W(S\cup\{\delta(a_{h_c})\},\langle a_{h_0}, \ldots,a_{h_{c}}  \rangle) = \Lambda_{|S|+1}v_{h_c}q_{h_c}\prod_{i=0}^{c-1}\gamma_{h_{i},h_{i+1}} +\max_{a}W(S,\langle a, a_{h_0},\ldots,$ $a_{h_{c-1}} \rangle)$ if $\delta(a_{h_c}) \not\in S$ and 
$-\infty$ otherwise. Given a random coloring, the probability that the ads composing the best allocation are colorful is $\frac{K!}{K^K}$. Thus, repeating the procedure $r{e^K}$ times, where $r\geq 1$, the probability of finding the best allocation is $1-e^{-r}$.
The complexity is $O((2e)^KKN^{c+2})$.
The algorithm can be derandomized with an additional cost of $O(\log^2(N))$.

By applying the above algorithm to the first $K'$ slots, $K' = \min\{K,\lceil \log(N) \rceil \}$, we obtain an algorithm with complexity $O(K^{3.5}N^{c+2}\log_2^2(N))$. We observe that if $c$ is not a constant, the complexity is exponential. It is not too hard to note that such an algorithm is $\frac{\log(N)}{2\min\{N,K\}}$-approximate. Moreover, this algorithm is MIR and as such can be used to design a truthful mechanism.

\section{Conclusions}
We enrich the literature on externalities in SSAs by introducing more general ways to model slot- and ad-dependent externalities, while giving a (nearly) complete picture of the computational complexity of the problem. 
In detail, we enrich the naive model of SSAs by adding: (\emph{i}) the concepts of limited user memory  
(\emph{ii}) contextual externalities and (\emph{iii})
refreshable user memory (i.e., reset  model).

This gives rise to the \csa{} model, where ad- and slot-dependent externalities are factorized as in the cascade model and the \caa{} model, where the externalities and not factorized.

We satisfactorily solve the problem for \csa{}, whereas our results leave unanswered a number of interesting questions, with regards to both approximation and truthfulness for \caa{}. The parameter $c$ is central to this list. 
If $c$ is constant, then we do not know whether a constant approximation algorithm for \caa$(c)$ exists; this holds also for the special case of FNE$^+_{aa}(c)$-nr when $\gamma_{min}$ is not a constant. In the latter case, when $\gamma_{\min}$ is instead constant we are not aware of any truthful constant approximation mechanism. 
Motivated by the fact that \caa{}-r is, apparently, an easier problem than \caa{}-nr, we believe that an interesting direction for future research is to study reset in more detail in order to understand its role w.r.t. the relatively harder \caa{}-nr.

\bibliography{citations}
\bibliographystyle{abbrv}


\end{document}